\definecolor{ItalianApricot}{rgb}{1,0.7,0.5}
\theoremstyle{plain}
\newtheorem{thm}{Theorem}[section]
\newtheorem{lem}[thm]{Lemma}
\newtheorem{cor}[thm]{Corollary}
\theoremstyle{definition}
\newtheorem{defn}[thm]{Definition}
\theoremstyle{remark}
\newtheorem{remark}[thm]{Remark}
\numberwithin{equation}{section}
\renewcommand{\epsilon}{\varepsilon}
\renewcommand{\phi}{\varphi}
\begin{document}

\title[Prefix-free quantum Kolmogorov complexity]{  Prefix-free quantum Kolmogorov complexity}

\date{\today}

\author{Tejas Bhojraj}
 \address[Tejas Bhojraj]{Department of Mathematics, University of Wisconsin--Madison, 480 Lincoln Dr., Madison, WI 53706, USA}
 \email{bhojraj@math.wisc.edu}


\maketitle
\begin{abstract}
    We introduce quantum-K ($QK$), a measure of the descriptive complexity of density matrices using classical prefix-free Turing machines and show that the initial segments of weak Solovay random and quantum Schnorr random states are incompressible in the sense of $QK$. Many properties enjoyed by prefix-free Kolmogorov complexity ($K$) have analogous versions for $QK$; notably a counting condition.
    
    Several connections between Solovay randomness and $K$, including the Chaitin type characterization of Solovay randomness, carry over to those between weak Solovay randomness and $QK$. We work towards a Levin\textendash Schnorr type characterization of weak Solovay randomness in terms of $QK$.
    
    Schnorr randomness has a Levin\textendash Schnorr characterization using $K_C$; a version of $K$ using a computable measure machine, $C$. We similarly define $QK_C$, a version of $QK$. Quantum Schnorr randomness is shown to have a  Levin\textendash Schnorr and a Chaitin type characterization using $QK_C$. The latter implies a Chaitin type characterization of classical Schnorr randomness using $K_C$. 
\end{abstract}

\keywords{Keywords: Kolmogorov complexity, Martin-L{\"o}f randomness, Schnorr randomness, qubit, density matrix, Prefix-free Turing machine.}
\tableofcontents
\section{Introduction and Overview}
The theory of computation has been extended to the quantum setting; a notable example being the conception of a quantum Turing machine \cite{Mller2007QuantumKC,doi:10.1137/S0097539796300921}. Similarly, algorithmic information theory has been generalized to the quantum realm \cite{Vitnyi2001QuantumKC, Berthiaume:2001:QKC:2942985.2943376, Mller2007QuantumKC,2014JMP....55h2205B, brudno}. It hence seems natural to extend algorithmic randomness, the study of randomness of infinite bitstrings using notions from computation and information, to the quantum realm \cite{unpublished, qpl, bhojraj2020quantum}. 

A coherent, infinite sequence of qubits (called a \emph{state}\cite{unpublished}), is the quantum version of an infinite bitstring. With this notion of a state in hand, one can define randomness for states similarly to how randomness is defined for infinite bitstrings. 

Martin-L{\"o}f randomness (which is equivalent to Solovay randomness) and Schnorr randomness for infinite bitstrings, defined using the concept of `effective null sets', have characterizations in terms of initial segment prefix-free Kolmogorov complexity (denoted by $K$) \cite{misc,misc1, DBLP:series/txtcs/Calude02}; the initial segments of random infinite bitstrings are incompressible in the sense of $K$. 

Quantum Solovay randomness and quantum Schnorr randomness for states have been defined using the quantum versions of effective null sets \cite{unpublished, bhojraj2020quantum}. Analogously to the classical situation, one may explore the connections between quantum Solovay randomness and quantum Schnorr randomness and the initial segment descriptive complexity of states (a length $n$ initial segment of a state is a density matrix on $n$ qubits). 

Motivated by this, we asked whether there is a quantum analogue of $K$ which yields a characterization of quantum Solovay and quantum Schnorr randomness. We define $QK$, a complexity measure for density matrices based on prefix-free, classical Turing machines. The abbreviation $QK$ stands for `quantum-K', reflecting our intention of developing a quantum analogue of $K$, the classical prefix-free Kolmogorov complexity.

To the best of our knowledge, all notions of quantum Kolmogorov complexity developed so far, with one exception\cite{Vitnyi2001QuantumKC}, exclusively use machines which are not prefix-free (plain classical machines or quantum Turing machines) \cite{doi:10.1137/S0097539796300921, Berthiaume:2001:QKC:2942985.2943376,Mller2007QuantumKC}. $K_Q$, a notion developed in \cite{Vitnyi2001QuantumKC} uses a quantum Turing machine, $Q$ together with the classical prefix-free Kolmogorov complexity in its definition.

We give an overview of the main points in the paper. 

In Section \ref{sec:2} we introduce quantum-K ($QK$) for density matrices and some of its properties. Theorem \ref{thm:67} (generalized in Lemma \ref{lem:K}) shows that $QK$ agrees with $K$ on the classical qubitstrings. Theorem \ref{thm:8} is a tight upper bound for $QK$ similar to that for $K$. Theorem \ref{thm:count} is a counting condition similar to that for $QC$ \cite{Berthiaume:2001:QKC:2942985.2943376}, $C$ and $K$ \cite{misc, misc1}.

Section \ref{sec:4} reviews quantum algorithmic randomness: a recently developed \cite{bhojraj2020quantum, unpublished} theory of randomness for states (infinite qubitstrings) using quantum versions of the classical `effectively null set'.

Section \ref{sec:3}, the main focus of this paper, connects $QK$ with two quantum algorithmic randomness notions: weak Solovay randomness and quantum Schnorr randomness, defined in Section \ref{sec:4}. Two important characterizations show that the initial segments of Martin-L{\"o}f randoms (equivalently, of Solovay randoms) are asymptotically incompressible in the sense of $K$: the Chaitin characterization (See \cite{Chaitin:1987:ITR:24912.24913} and theorem 3.2.21 in \cite{misc}),
\[X  \text{ is Martin-L{\"o}f random} \iff \text{lim}_{n}  K(X\upharpoonright n)-n= \infty,\]
and the Levin\textendash Schnorr characterization (See theorem 3.2.9 in \cite{misc}),
\[X  \text{ is Martin-L{\"o}f random} \iff \exists c \forall n [ K(X\upharpoonright n)>n-c].\]
(Characterizations having the former form  will be called `Chaitin type' and those having the latter form will be called `Levin\textendash Schnorr type').
We investigate the extent to which these classical characterizations carry over to weak Solovay randoms and $QK$.

Theorem \ref{thm:7} is a Chaitin type of characterization of 
weak Solovay randomness ($\rho$ is weak Solovay random $\iff$ lim$_{n}QK^{\epsilon}(\rho_{n})-n = \infty$).
This shows that the Levin\textendash Schnorr condition ($\forall n[QK^{\epsilon}(\rho_{n})>^{+}n]$) is implied by weak Solovay randomness. 

Theorem \ref{thm:39} shows both Chaitin and Levin\textendash Schnorr type characterizations of weak Solovay randomness when restricting attention to a specific class of states. It is worth noting that Theorem \ref{thm:39} uses the proof of our main result (Theorem 2.11) in \cite{bhojraj2020quantum}. 

For general states, subsection \ref{subsect:weak} shows that the Levin\textendash Schnorr condition implies something slightly weaker than weak Solovay randomness. 

While $K$ plays well with Solovay randomness, $K_C$, a version of $K$ using a computable measure machine, $C$ (a prefix-free Turing machine whose domain has computable Lebesgue measure) gives a Levin\textendash Schnorr  characterization of Schnorr randomness (See theorem 7.1.15 in \cite{misc1}). Motivated by this, we introduce $QK_C$ a version of $QK$ using computable measure machines in subsection \ref{schn}. 

It turns out that $QK_C$ yields not just a Levin\textendash Schnorr type (Theorem \ref{thm:schnor}), but also a Chaitin type (Theorem \ref{thm:Chaitin}) characterization of quantum Schnorr randomness.

Theorem \ref{thm:Chaitin} together with Theorem \ref{thm:677} and lemma 3.9 in \cite{bhojraj2020quantum}, implies that Schnorr randoms have a Chaitin type characterization in terms of $K_C$ (Theorem \ref{class}). So, results in the quantum realm imply a new result in the classical setting.

In summary, we introduce $QK$ and show that the initial segments of weak Solovay random and quantum Schnorr random states are incompressible in the sense of $QK$.

\section{The Definition and Properties of QK}
\label{sec:2}
 We assume familiarity with the notions of density matrix (See for example, \cite{Nielsen:2011:QCQ:1972505}), prefix-free Kolmogorov complexity ($K$) and $\mathbb{U}$, the universal prefix-free (or self-delimiting) Turing machine (See \cite{misc,misc1, DBLP:series/txtcs/Calude02}).
 
 The output of $\mathbb{U}$ can be interpreted as unordered tuples of complex algebraic vectors (equivalently, finite subsets of natural numbers). 
 
The notation $\mathbb{U}(\sigma)\downarrow = F$ means that $ \mathbb{U}(\sigma)$ outputs the index of $F$ with respect to some fixed canonical indexing of finite subsets of the naturals. We will never use an ordering on the elements of $F$ in any of our arguments: $F$ will be used to define an orthogonal projection : $\sum_{v \in F} |v\big>\big<v|$ which clearly does not depend on an ordering on $F$. 

As explained in \cite{unpublished}, the quantum analogue of a bitstring of length $n$ is a density matrix on $\mathbb{C}^{2^n}$.
 
For a density matrix, $\tau$, let $|\tau|$ denote the $s$ such that $\tau$ is a transformation on $\mathbb{C}^{2^s}$. For any $n$, $\mathbb{C}^{2^n}_{alg}$ is the space of elements of $\mathbb{C}^{2^n}$ with complex algebraic entries.

Logarithms will always be base 2. The notation $\leq^+ , \geq^+, =^+$ will be used for `upto additive constant' relations.

For $\epsilon >0 $, $QK^{\epsilon}(\tau)$ is defined to be
\begin{defn}
\label{defn:4}
$QK^{\epsilon}(\tau) := $ inf  $\{|\sigma|+$log$|F|: \mathbb{U}(\sigma)\downarrow = F$, a orthonormal set in $\mathbb{C}^{2^{|\tau|}}_{alg}$ and $\sum_{v \in F} \big<v |\tau |v \big> > \epsilon \}$

\end{defn}
 The term $\sum_{v \in F} \big<v |\tau |v \big>$ is the squared length of the `projection of $\tau$ onto span($F$)' which also equals the probability of getting an outcome of `1' when measuring $\tau$ with the observable given by the Hermitian projection onto span$(F)$\cite{Nielsen:2011:QCQ:1972505}. Although it is useful to intuitively think of $\sum_{v \in F} \big<v |\tau |v \big>$ as the `projection of $\tau$ onto span($F$)', we use quotes as $\tau$ is a convex combination of possibly \emph{multiple} unit vectors, while the notion `projection onto a subspace' refers usually to a single vector. 
 
 Note that for a given $\tau$, $QK^{\epsilon}(\tau)$ is determined by the classical prefix-free complexities and dimensions of those subspaces, span$(F)$, such that the projection of $\tau$ onto span$(F)$ has squared length atleast $\epsilon$. I.e., $QK^{\epsilon}(\tau)$ depends only on the $K$-complexities and ranks of those projective measurements of $\tau$ such that the probability of getting an outcome of `$1$' is atleast $\epsilon$. Roughly speaking, $QK^{\epsilon}(\tau)$ depends on the dimensions and prefix-free complexities of subspaces which are $\epsilon$ `close' to $\tau$. 
 
 This is in contrast to $QC^{\epsilon}(\tau)$ which depends on the quantum complexities of density matrices, not classical prefix-free complexities of subspaces, which are $\epsilon$ close to $\tau$ (Recall that $QC$ is based on quantum Turing machines) \cite{Berthiaume:2001:QKC:2942985.2943376}. Also, while the rank of the approximating projection is taken into consideration in $QK$, the rank of the approximating density matrix is not taken into account in $QC$.
 
 So, $QC^{\epsilon}(\tau)$ quantifies the quantum complexity of approximating $\tau$ by density matrices upto $\epsilon$ while $QK^{\epsilon}(\tau)$ measures the sum of the prefix-free complexity and the logarithm of the dimension of subspaces $\epsilon$ close to $\tau$. 

A test demonstrating the quantum non-randomness of a state, $\rho$ uses computable sequences of projections of `small rank' which are $\epsilon$-close to initial segments (density matrices) of $\rho$ (See Section \ref{sec:4}). It hence seems plausible that a complexity measure for a density matrix, $\tau$ must reflect the complexities and ranks of projections $\epsilon$-close to $\tau$ in order to play well with quantum randomness notions for states.

We mention that our $QK$ is entirely different from the $ \overline{QK_M} $ and $\overline{QK^{\delta}_M}$ notions defined in Definition 3.1.1 in \cite{Mller2007QuantumKC} using quantum Turing machines.

$QK^{\epsilon}$ would not be a `natural' complexity notion for density matrices if the following theorem did not hold:
\begin{thm}
\label{thm:67}
 Fix a rational $\epsilon$. $K(\sigma)= QK^{\epsilon}(|\sigma\big>\big<\sigma|)$ holds for all classical bitstrings $\sigma$, upto an additive constant depending only on $\epsilon$.
\end{thm}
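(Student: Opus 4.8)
The plan is to prove the two inequalities $QK^{\epsilon}(|\sigma\rangle\langle\sigma|) \leq^{+} K(\sigma)$ and $K(\sigma) \leq^{+} QK^{\epsilon}(|\sigma\rangle\langle\sigma|)$ separately, each up to an additive constant depending only on $\epsilon$ (we may assume $0 < \epsilon < 1$, since otherwise $QK^{\epsilon}$ is infinite: for any orthonormal $F$ one has $\sum_{v\in F}\langle v|\tau|v\rangle = \mathrm{tr}(P\tau) \leq \mathrm{tr}(\tau) = 1$, where $P$ is the projection onto $\mathrm{span}(F)$). Throughout write $\tau = |\sigma\rangle\langle\sigma|$ and $n = |\sigma|$, so $|\sigma\rangle$ is the computational basis vector of $\mathbb{C}^{2^n}$ indexed by $\sigma$; its entries are rational, hence algebraic, and the canonical index of the singleton $\{|\sigma\rangle\}$ is computable from $\sigma$.

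For the upper bound I would use the singleton witness $F = \{|\sigma\rangle\}$. Since $\langle\sigma|\tau|\sigma\rangle = |\langle\sigma|\sigma\rangle|^2 = 1 > \epsilon$ and $\log|F| = 0$, this $F$ is admissible in the infimum defining $QK^{\epsilon}(\tau)$. Taking a shortest $\mathbb{U}$-program $\pi$ for $\sigma$ and composing with the computable map $\sigma \mapsto \mathrm{index}(\{|\sigma\rangle\})$ yields a prefix-free machine outputting $\{|\sigma\rangle\}$ on input $\pi$; universality of $\mathbb{U}$ then gives a $\mathbb{U}$-program of length $K(\sigma) + O(1)$ for $\{|\sigma\rangle\}$, whence $QK^{\epsilon}(\tau) \leq K(\sigma) + O(1)$.

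The harder direction is the lower bound, and its crux is a counting argument. Fix a near-optimal witness $(\rho, F)$ with $\mathbb{U}(\rho)\downarrow = F$, $\sum_{v\in F}\langle v|\tau|v\rangle > \epsilon$, and $|\rho| + \log|F| \leq QK^{\epsilon}(\tau) + 1$, and let $P = \sum_{v\in F}|v\rangle\langle v|$. The key observation is that $\mathrm{tr}(P) = |F|$, while the admissibility condition says precisely $\langle\sigma|P|\sigma\rangle > \epsilon$. Hence, letting $L$ be the set of computational basis vectors $e$ of $\mathbb{C}^{2^n}$ with $\langle e|P|e\rangle > \epsilon$, we have $|\sigma\rangle \in L$, and since $\sum_{e}\langle e|P|e\rangle = \mathrm{tr}(P) = |F|$ we get $|L| < |F|/\epsilon$. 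Because the entries of the vectors in $F$ are algebraic, each quantity $\langle e|P|e\rangle = \sum_{v\in F}|\langle v|e\rangle|^2$ can be computed exactly and compared to the rational $\epsilon$, so $L$, together with $n$ (read off from the dimension of the vectors in $F$) and $|L|$, is computable from $\rho$ alone.

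This lets me recover $\sigma$ prefix-freely. I would build a prefix-free machine $M$ that, on an input beginning with $\rho$, runs $\mathbb{U}$ on the self-delimiting prefix $\rho$ to obtain $F$, recomputes $L$ and $|L|$, then reads exactly $\lceil\log|L|\rceil$ further bits as an index selecting a basis vector of $L$, and outputs the corresponding bitstring. Feeding $M$ the string $\rho$ followed by the index of $|\sigma\rangle$ within $L$ produces $\sigma$, so by universality $K(\sigma) \leq |\rho| + \lceil\log|L|\rceil + O(1) \leq |\rho| + \log|F| + \log(1/\epsilon) + O(1) \leq QK^{\epsilon}(\tau) + \log(1/\epsilon) + O(1)$, with $\log(1/\epsilon)$ a constant depending only on $\epsilon$. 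The main obstacle I anticipate is exactly this second direction: obtaining the trace/pigeonhole bound that localizes $|\sigma\rangle$ into a computable list of size controlled by $|F|$ and $\epsilon$, and then arranging the concatenation of $\rho$ with the index so that the whole description stays prefix-free — which succeeds precisely because $\mathbb{U}$ consumes $\rho$ self-delimitingly and $|L|$ is determined before the index bits are read.
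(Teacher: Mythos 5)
Your proposal is correct and follows essentially the same route as the paper: the easy direction via the singleton witness $F=\{|\sigma\rangle\}$, and the hard direction via the trace/pigeonhole bound $|L|<\epsilon^{-1}|F|$ on the set of computational basis vectors $e$ with $\langle e|P|e\rangle>\epsilon$ (this is exactly the paper's Lemma \ref{lem:30}), followed by a prefix-free machine that decodes $F$ self-delimitingly and then reads a fixed number of index bits to locate $\sigma$ in that list. The only cosmetic difference is that you read $\lceil\log|L|\rceil$ index bits computed from $F$ while the paper fixes the block length at $\lceil\log(\epsilon^{-1}|F|)\rceil$; both preserve prefix-freeness and give the same bound.
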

We isolate here a simple but useful property which will be used for proving Theorem \ref{thm:67}. 
\begin{lem}
\label{lem:30}
Let $n$ be a natural number, $E=(e_{i})_{i=1}^{2^{n}}$ be any orthonormal basis for $\mathbb{C}^{2^{n}}$ and $F$ be any Hermitian projection matrix acting on $\mathbb{C}^{2^{n}}$. For any $\delta>0$, let  \[S^{\delta}_{E,F}:= \{e_{i} \in E: \big<e_{i}|F|e_{i}\big> > \delta \}.\]
Then, $|S^{\delta}_{E,F}| < \delta^{-1} \text{Tr}(F)$.
\end{lem}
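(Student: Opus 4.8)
The plan is to recognize this as a Markov/Chebyshev-type counting inequality dressed up in the language of traces. The key observation is that, because $F$ is a Hermitian projection, it is positive semidefinite, and hence every diagonal matrix entry relative to the orthonormal basis $E$ is nonnegative: $\langle e_i | F | e_i \rangle \geq 0$ for each $i$. This nonnegativity is what will let me discard unwanted terms without reversing any inequalities.

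First I would exploit the basis-independence of the trace to evaluate $\mathrm{Tr}(F)$ in the basis $E$, writing $\mathrm{Tr}(F) = \sum_{i=1}^{2^n} \langle e_i | F | e_i \rangle$. Next I would split this sum according to membership in $S^{\delta}_{E,F}$ and simply drop the terms coming from indices outside $S^{\delta}_{E,F}$; since those discarded terms are all $\geq 0$, this can only decrease the sum. What remains is a sum over $S^{\delta}_{E,F}$, and by the defining property of that set each surviving summand strictly exceeds $\delta$. Combining these steps yields
\[
\mathrm{Tr}(F) = \sum_{i=1}^{2^n} \langle e_i | F | e_i \rangle \;\geq\; \sum_{e_i \in S^{\delta}_{E,F}} \langle e_i | F | e_i \rangle \;>\; \delta \, \bigl| S^{\delta}_{E,F} \bigr|,
\]
from which dividing by $\delta > 0$ gives the claimed bound $|S^{\delta}_{E,F}| < \delta^{-1}\,\mathrm{Tr}(F)$.

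There is no serious obstacle here; the only points that require care are justifying that the trace may be computed in the (arbitrary) basis $E$ rather than in an eigenbasis of $F$, and tracking that the strict inequality in the definition of $S^{\delta}_{E,F}$ propagates to a strict inequality in the conclusion (this is why we get ``$<$'' rather than ``$\leq$'', and it is essential that $S^{\delta}_{E,F}$ is nonempty only when its contribution strictly dominates $\delta$ times its cardinality). I would also remark that positive semidefiniteness of $F$, not the stronger fact that it is a projection, is all that the argument actually uses; the projection hypothesis is simply the form in which $F$ arises in the intended application.
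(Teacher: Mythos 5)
Your proposal is correct and follows essentially the same argument as the paper: both use that the diagonal entries $\langle e_i|F|e_i\rangle$ are nonnegative (the paper notes $\langle e_i|F|e_i\rangle = |Fe_i|^2 \geq 0$ since $F$ is a Hermitian projection), compute the trace in the basis $E$, and chain $\delta|S^{\delta}_{E,F}| < \sum_{e_i \in S^{\delta}_{E,F}}\langle e_i|F|e_i\rangle \leq \mathrm{Tr}(F)$. Your observation that only positive semidefiniteness is needed is accurate but does not change the argument.
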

\begin{proof}
Note that since $F$ is a Hermitian projection, $\big<e_{i}|F|e_{i}\big>=\big<Fe_{i}|F e_{i}\big> = |F e_{i}|^{2}\geq 0$. So,
\[\delta|S^{\delta}_{E,F}| < \sum_{e_{i} \in S^{\delta}_{E,F}}\big<e_{i}|F|e_{i}\big> \leq \sum_{i \leq 2^{n}}\big<e_{i}|F|e_{i}\big> = \text{Tr}(F).\]
\end{proof}
\begin{proof}

We now prove Theorem \ref{thm:67}, the idea of which is as follows: Given a classical bitstring and a subspace `close' to it, we find a subspace spanned only by classical bitstrings `close' to this subspace. Then we compress each of the spanning classical strings and show that the string we began with must be one of these.
Fix a rational $\epsilon$. Consider the machine $P$ doing the following:
\begin{enumerate}
    \item On input $\pi$, $P$ searches for $\pi=\sigma \tau$ such that 
    $\mathbb{U}(\sigma)\downarrow = F$, an orthonormal set, $F \subseteq \mathbb{C}^{2^{n}}_{alg}$ for some $n$ and $|\tau|= \lceil \text{log}(\epsilon^{-1}|F|)\rceil$.
    \item Letting $O:= \sum_{v\in F} |v\big>\big<v|$ and $E$, the standard basis of $\mathbb{C}^{2^{n}}$, find the set $S^{\epsilon}_{E,O}$ from Lemma \ref{lem:30}.
    \item Take a canonical surjective map $g$ from the set of bitstrings of length $\lceil \text{log}(\epsilon^{-1}|F|)\rceil$ onto $S^{\epsilon}_{E,O}$. ($g$ exists since  $|S^{\epsilon}_{E,O}|<\epsilon^{-1}|F|$ by \ref{lem:30}).  Output $g(\tau)$.
\end{enumerate}
We first show that $P$ is prefix-free. Suppose $\pi$ and $\pi'$ are in the domain of $P$ and $\pi \preceq \pi'$. Then, $\pi=\sigma \tau$ and $\pi'=\sigma'\tau'$ and $\sigma$ and $\sigma'$ are in the domain of $\mathbb{U}$. $\pi \preceq \pi'$ implies that $\sigma \preceq \sigma'$ or $\sigma' \preceq \sigma$. But as $\mathbb{U}$ is prefix-free, $\sigma = \sigma'$ must hold. Since the computations $\mathbb{U}(\pi)$, and $\mathbb{U}(\pi')$ and not stuck forever at (1), it must be that $|\tau|= \lceil \text{log}(\epsilon^{-1}|F|)\rceil$ and $|\tau'|= \lceil \text{log}(\epsilon^{-1}|F'|)\rceil$ where $\mathbb{U}(\sigma)\downarrow=F=F'=\mathbb{U}(\sigma')\downarrow$. So, $\tau$ and $\tau'$ have the same length implying that $\pi=\pi'$.
\\
Now, let $\sigma \in 2^n$ be any classical bitstring. Let $\lambda$ and $F \subseteq \mathbb{C}^{2^{n}}_{alg}$ a orthonormal set such that $|\lambda|+ \text{log}(|F|)=QK^{\epsilon}(|\sigma\big>\big<\sigma|)$,  $\sum_{v \in F} \big<v |\sigma\big>\big<\sigma|v \big> > \epsilon$ and $\mathbb{U}(\lambda)=F$. Let $O:= \sum_{v\in F} |v\big>\big<v|$. Note that since
$\epsilon<\sum_{v \in F} \big<v |\sigma\big>\big<\sigma|v \big> = \big<\sigma|O|\sigma\big>$, $\sigma \in S^{\epsilon}_{E,O}$ where $E$ is the standard basis. Let $\tau$ be a length $ \lceil \text{log}(\epsilon^{-1}|F|)\rceil$ string such that $g(\tau)=\sigma$. Then, we see that $P(\lambda \tau) = \sigma $. \[\text{K}(\sigma)\leq^+   |\lambda|+|\tau|\leq^+  |\lambda|+\text{log}(\epsilon^{-1})+ \text{log}(|F|) \leq^+ QK^{\epsilon}(|\sigma\big>\big<\sigma|)\]
This establishes one direction. Note that the additive constant depends on $\epsilon$. The constant (zero) in the other direction turns out to be independent of $\epsilon$: Given some classical bitstring $\sigma$, let $\mathbb{U}(\pi)=\sigma$ and $|\pi|=\text{K}(\sigma)$. Then, letting $F=\{\sigma\}$ in \ref{defn:4}, $QK^{\epsilon}(\sigma)\leq QK^{1}(\sigma)\leq |\pi|=K(\sigma)$, for any $\epsilon>0$. 

\end{proof}
 \begin{defn}
\label{defn:78}
A `system' $B= ((b^{n}_{0},b^{n}_{1}))_{n\in \mathbb{N}}$ is a sequence of orthonormal bases for $\mathbb{C}^{2}$ such that each $b^{n}_{i}$ is complex algebraic and the sequence $ ((b^{n}_{0},b^{n}_{1}))_{n\in \mathbb{N}}$ is computable.
\end{defn} 
\begin{remark}\label{rem:K}
Let $B= ((b^{n}_{0},b^{n}_{1}))_{n\in \mathbb{N}}$ be a  system, as in \ref{defn:78}. Let $A_B$ be the set of all pure states, $\sigma$ such that $\sigma$ is a product tensor of elements from $B$. For example, $b^{1}_{0}\otimes b^{2}_{1}\otimes b^{3}_{1}\otimes b^{4}_{0} \in A_B$. Then, the previous theorem generalizes to the following: Fix a rational $\epsilon$, a $B$ and a $A_B$ as above. $K(\sigma)=^+ QK^{\epsilon}(|\sigma\big>\big<\sigma|)$ holds for all $\sigma \in A_B$, upto an additive constant depending only on $B$ and $\epsilon$. Here, $K(\sigma)$ is defined in the obvious way. For example, $K(b^{1}_{0}\otimes b^{2}_{1}\otimes b^{3}_{1}\otimes b^{4}_{0})=K(0110)$. This is proved by replacing $S^{\epsilon}_{E,O}$ with $S^{\epsilon}_{B,O}$ in the proof of Theorem \ref{thm:67}.
\end{remark}
The following lemma can be proved similarly to Theorem \ref{thm:67}.
\begin{lem}
\label{lem:K}
Fix a rational $\epsilon$ and let $(B_{n})_n$ be a computable sequence such that $B_n$ is a orthonormal basis for $\mathbb{C}^{2^{n}}$ composed of algebraic complex vectors. Then, for all $\sigma \in \bigcup_{n} B_{n}$, $K(\sigma)= QK^{\epsilon}(|\sigma\big>\big<\sigma|)$, upto an additive constant depending only on $\epsilon$ and $(B_{n})_n$.
\end{lem}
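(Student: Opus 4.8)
The plan is to run the two-directional argument of Theorem \ref{thm:67} essentially verbatim, changing only one thing: I replace the standard basis $E$ of $\mathbb{C}^{2^{n}}$ by the computable basis $B_n$ throughout. The one genuinely new point is that $K(\sigma)$ now means the $K$-complexity of the index of $\sigma$ in a fixed computable enumeration of $\bigcup_n B_n$ rather than of a bitstring that is literally its own code; so I will lean on the computability of $(B_n)_n$ to pass freely between such an index and the vector itself.

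For the inequality $K(\sigma)\leq^+ QK^{\epsilon}(|\sigma\big>\big<\sigma|)$, I would build a prefix-free machine $P$ exactly as before: on input $\pi$ it searches for a split $\pi=\lambda\tau$ with $\mathbb{U}(\lambda)\downarrow=F$ an orthonormal subset of $\mathbb{C}^{2^{n}}_{alg}$ and $|\tau|=\lceil\text{log}(\epsilon^{-1}|F|)\rceil$, sets $O:=\sum_{v\in F}|v\big>\big<v|$, and forms $S^{\epsilon}_{B_n,O}$. The only step needing a remark is that step~(2) remains effective: the dimension $n$ is read off from $F$, and since $(B_n)_n$ is computable the basis $B_n$, and hence the finite set $S^{\epsilon}_{B_n,O}$, is computable from $n$. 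Lemma \ref{lem:30} again gives $|S^{\epsilon}_{B_n,O}|<\epsilon^{-1}\text{Tr}(O)=\epsilon^{-1}|F|$, so the canonical surjection $g$ from strings of length $\lceil\text{log}(\epsilon^{-1}|F|)\rceil$ onto $S^{\epsilon}_{B_n,O}$ exists, and $P$ outputs the index of $g(\tau)$. Prefix-freeness of $P$ follows by the identical argument. Now fix $\sigma\in B_n$ and let $\lambda,F$ witness $QK^{\epsilon}(|\sigma\big>\big<\sigma|)$ with $\mathbb{U}(\lambda)\downarrow=F$. Since $\big<\sigma|O|\sigma\big>=\sum_{v\in F}\big<v|\sigma\big>\big<\sigma|v\big>>\epsilon$, we have $\sigma\in S^{\epsilon}_{B_n,O}$; choosing $\tau$ with $g(\tau)=\sigma$ gives $P(\lambda\tau)=\sigma$, whence $K(\sigma)\leq^+|\lambda|+|\tau|\leq^+|\lambda|+\text{log}(\epsilon^{-1})+\text{log}|F|\leq^+ QK^{\epsilon}(|\sigma\big>\big<\sigma|)$.

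For the reverse inequality I would put $F=\{\sigma\}$ in Definition \ref{defn:4}. Because $(B_n)_n$ is computable, there is a prefix-free machine that, given a $\mathbb{U}$-program for the index of $\sigma$, computes and outputs the canonical index of the singleton $\{\sigma\}$; by the invariance theorem this furnishes a $\mathbb{U}$-description of $\{\sigma\}$ of length $\leq^+ K(\sigma)$. Since $\big<\sigma|\sigma\big>=1>\epsilon$ and $\text{log}|F|=0$, this yields $QK^{\epsilon}(|\sigma\big>\big<\sigma|)\leq^+ K(\sigma)$, and both additive constants depend only on $\epsilon$ and $(B_n)_n$.

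I expect the main obstacle to be purely the bookkeeping forced by this reinterpretation of $K(\sigma)$: in Theorem \ref{thm:67} the string $\sigma$ coincided with its own code, whereas here I must invoke the computability of $(B_n)_n$ in two distinct places, namely to compute $S^{\epsilon}_{B_n,O}$ from the dimension inside $P$, and to translate between the index of $\sigma$ and the vector $\sigma$ (equivalently the set $\{\sigma\}$) in the reverse direction. Once these translations are set up, the combinatorial core of the proof, the counting bound of Lemma \ref{lem:30}, is used unchanged.
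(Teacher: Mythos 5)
Your proposal is correct and takes exactly the route the paper intends: the paper gives no separate proof, saying only that the lemma ``can be proved similarly to Theorem~\ref{thm:67}'' (cf.\ Remark~\ref{rem:K}, which prescribes precisely the substitution of $S^{\epsilon}_{B_n,O}$ for $S^{\epsilon}_{E,O}$), and your two additional uses of the computability of $(B_n)_n$ --- to compute $S^{\epsilon}_{B_n,O}$ inside $P$ and to translate between $\sigma$ and its index in the reverse direction --- are exactly the bookkeeping that substitution requires. There are no gaps.
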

Note that $K(\sigma)$ is well-defined as $\sigma$ is complex algebraic.
The following Theorem \ref{thm:8} agrees nicely with the upper bound for $K$ in the classical setting: for all strings $x$, $K(x) \leq |x|+K(|x|)+1 $ (See theorem 2.2.9 in \cite{misc}.).
\begin{thm}
\label{thm:8}
There is a constant $d>0$ such that for any $\epsilon$ and any $\tau$, QK$^{\epsilon}(\tau) \leq |\tau| + $ K$(|\tau|) + d$.
\end{thm}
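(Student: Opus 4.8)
The plan is to exhibit, for every density matrix $\tau$ and every relevant $\epsilon$, a single orthonormal set $F$ witnessing the infimum in Definition \ref{defn:4}, with $|\sigma|+\log|F|$ bounded as required. The natural candidate is $F=E$, the standard basis of $\mathbb{C}^{2^{|\tau|}}$, whose vectors have entries in $\{0,1\}$ and hence lie in $\mathbb{C}^{2^{|\tau|}}_{alg}$. This mirrors the classical bound $K(x)\leq |x|+K(|x|)+O(1)$: the term $\log|E|=|\tau|$ will play the role of $|x|$, and a description of the dimension $n=|\tau|$ will play the role of $K(|x|)$.

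First I would verify the projection condition. Since $E$ is a full orthonormal basis, $\sum_{v\in E}\langle v|\tau|v\rangle=\text{Tr}(\tau)=1$, which exceeds $\epsilon$ for every $\epsilon<1$; thus $F=E$ is admissible in Definition \ref{defn:4} for all such $\epsilon$ simultaneously. Crucially, this witness does not depend on $\epsilon$ nor on the entries of $\tau$, only on $n=|\tau|$, which is precisely what will force the final constant to be $\epsilon$-uniform.

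Next I would bound the two contributions separately. The dimension term is immediate: $\log|E|=\log\!\big(2^{|\tau|}\big)=|\tau|$. For the description term, observe that the index of $E$ under the fixed canonical indexing of finite subsets of $\mathbb{N}$ is uniformly computable from $n=|\tau|$. Hence, by feeding a shortest $\mathbb{U}$-program for $|\tau|$ into a fixed prefix-free machine that computes the standard basis of $\mathbb{C}^{2^n}$ from $n$ and then invoking the universality of $\mathbb{U}$, one obtains a $\sigma$ with $\mathbb{U}(\sigma)\downarrow=E$ and $|\sigma|\leq K(|\tau|)+d$, where $d$ depends only on the cost of simulating this auxiliary machine on $\mathbb{U}$. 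Combining, $QK^{\epsilon}(\tau)\leq |\sigma|+\log|E|\leq |\tau|+K(|\tau|)+d$.

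The only point requiring care is the uniformity of $d$ in both $\epsilon$ and $\tau$, which is the feature the theorem asserts, and this is handled automatically by the choices above: the witness $E$ and the program $\sigma$ depend solely on $n=|\tau|$, and $d$ is just the coding constant of the auxiliary machine into $\mathbb{U}$. (For $\epsilon\geq 1$ the admissible family in Definition \ref{defn:4} is empty since $\text{Tr}(\tau)=1$, so the statement is understood in the intended range $\epsilon<1$.) I do not expect a genuine obstacle here; the entire content is the observation that the full basis is simultaneously a trivial witness for all $\epsilon<1$ and cheap to describe.
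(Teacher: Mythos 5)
Your proof is correct and takes essentially the same route as the paper: the paper's (very terse) argument likewise defines a prefix-free machine that, on a shortest $\mathbb{U}$-program for $n=|\tau|$, outputs the standard basis $E$ of $\mathbb{C}^{2^n}$, so that $\log|E|=|\tau|$ and the description cost is $K(|\tau|)$ plus a coding constant. Your additional remarks on the $\epsilon$-uniformity of the witness and the degenerate case $\epsilon\geq 1$ are accurate and merely make explicit what the paper leaves implicit.
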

\begin{proof}
Let $k>1$. Let $P$ be the prefix-free Turing machine which on input $\pi$, such that $\mathbb{U}(\pi)=n$ outputs $E=(e_{i})_{i=1}^{2^{n}}$, the standard computational basis of $\mathbb{C}^{2^{n}}$. 
\end{proof}
It may seem that this upper bound, given by the apparently inefficient device of using $2^{|\tau|}$ many orthonormal vectors to approximate $\tau$, can be improved. However, the bound is tight by Theorem \ref{thm:67} together with the classical counting theorem (see \cite{misc1}, theorem 3.7.6.).
\\
As we shall see later, the unique tracial state $\tau = (\tau_{n})_{n\in \mathbb{N}}$ where for all $n$, $\tau_n$ is the $2^n$ by $2^n$ diagonal matrix with $2^{-n}$ along the diagonal is quantum Martin-L{\"o}f random. Theorem \ref{thm:31} shows that its initial segments achieve the upper bound given by Theorem \ref{thm:8}.
\begin{thm}
\label{thm:31}
Let $k$ be any natural number. There is a constant $t$ such that for all $n$, $QK^{2^{-k}}(\tau_{n})\geq n+K(n)-t$.
\end{thm}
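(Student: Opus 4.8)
The plan is to exploit the very simple structure of the maximally mixed state $\tau_n = 2^{-n} I_{2^n}$. For any orthonormal set $F \subseteq \mathbb{C}^{2^n}_{alg}$, the ``projection'' quantity appearing in Definition \ref{defn:4} collapses completely:
\[\sum_{v \in F} \big< v | \tau_n | v \big> = 2^{-n} \sum_{v \in F} \big< v | v \big> = 2^{-n} |F|,\]
since each $v$ is a unit vector and $\tau_n$ is a scalar multiple of the identity. Hence the admissibility constraint $\sum_{v \in F}\big< v|\tau_n|v\big> > 2^{-k}$ in the definition of $QK^{2^{-k}}(\tau_n)$ is equivalent to the purely combinatorial condition $|F| > 2^{\,n-k}$, which in turn forces $\log|F| > n - k$. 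Note this last inequality holds for \emph{every} $n$, including $n < k$, where the right-hand side is negative while $\log|F| \ge 0$.

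It then remains only to control the $|\sigma|$ summand. I would observe that any admissible $\sigma$ already encodes $n$: from $\mathbb{U}(\sigma)\downarrow = F$ one decodes the finite set $F$ of complex algebraic vectors and reads off their common ambient dimension $2^{|\tau_n|} = 2^n$, hence $n$ itself. I would package this into a prefix-free machine $M$ that on input $\sigma$ runs $\mathbb{U}(\sigma)$, and, if the output decodes to a set of vectors all lying in a single $\mathbb{C}^{2^m}_{alg}$, outputs $m$ (diverging otherwise). Since $\mathrm{dom}(M)\subseteq \mathrm{dom}(\mathbb{U})$, the machine $M$ inherits prefix-freeness from $\mathbb{U}$, and crucially $M$ depends on neither $k$ nor $n$. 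By universality this yields an absolute constant $c$ with $K(n) \le |\sigma| + c$, i.e.\ $|\sigma| \ge K(n) - c$, for every admissible $\sigma$.

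Combining the two estimates, for every admissible pair $(\sigma, F)$ we obtain $|\sigma| + \log|F| > (K(n) - c) + (n-k) = n + K(n) - (k + c)$. Taking the infimum over all admissible pairs gives $QK^{2^{-k}}(\tau_n) \ge n + K(n) - t$ with $t := k + c$, which is the desired bound. The step requiring the most care — though it is routine rather than deep — is verifying that $M$ is genuinely prefix-free and that its constant $c$ is independent of both $n$ and $k$; the entire $k$-dependence of the final bound enters only through the slack $-k$ in $\log|F| > n-k$. For the small values of $n$ where $n + K(n) - t < 0$ the statement is vacuous, since $QK^{2^{-k}} \ge 0$ by definition, so no separate argument is needed in that range.
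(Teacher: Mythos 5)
Your proof is correct and follows essentially the same route as the paper's: both rest on the identity $\sum_{v\in F}\big<v|\tau_n|v\big> = 2^{-n}|F|$ (forcing $\log|F| > n-k$ for any admissible $F$) and on a prefix-free machine that recovers $n$ from $\mathbb{U}(\sigma)\downarrow=F$ by reading off the ambient dimension (giving $K(n)\leq|\sigma|+c$ for an absolute constant $c$). The only difference is presentational: the paper argues by contradiction via a hypothetical sequence $n_t$ witnessing failure for every $t$, whereas you combine the two bounds directly, which is arguably cleaner.
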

\begin{proof} 
Fix a $k$ and suppose towards a contradiction that for all $t \in \mathbb{N}$, there is a $n_{t} $ such that $QK^{2^{-k}}(\tau_{n_{t}})< n_{t}+K(n_{t})-t$. So, for all $t$, there are $F_{t} \subseteq \mathbb{C}^{2^{n_{t}}}$ and $\sigma_{t}$ such that $\mathbb{U}(\sigma_{t})=F_{t}$ and

\[2^{-k}<\sum_{v\in F_{t}} \big<v|\tau_{n_{t}}|v\big>= 2^{-n_{t}}|F_{t}|,\]
and
\[|\sigma_{t}|+\text{log}(|F_{t}|) < n_{t}+K(n_{t})-t.\]
\\
Taking log on both sides of the first inequality and inserting in the second gives that for all $t$, $n_t$ and $\sigma_{t}$,
\begin{align}
    \label{eq:43}
    && t-k+|\sigma_{t}| <  K(n_{t}). 
\end{align}
Now, define a prefix-free machine $M$ as follows. On input $\pi$, $M$ checks if $\mathbb{U}(\pi)$ halts and outputs a orthonormal set $F\subseteq \mathbb{C}^{2^{n}}$ for some $n$. If so, then $M(\pi)=n$. Let r be the coding constant of $M$. Note that for all $t$, $M(\sigma_{t})=n_t$. So, $K(n_t) \leq |\sigma_{t}|+r.$
Together with \eqref{eq:43}, we have that for all $t$,
$t-k+|\sigma_{t}|< |\sigma_{t}|+r$. So, $t-k<r$ for all $t$, a contradiction.

\end{proof}

In contrast to Lemma \ref{lem:K}, we have,
\begin{lem}
Fix an $\epsilon>0$ and an $n\in \mathbb{N}$. It is not true that for all $\sigma$, complex algebraic pure states in $\mathbb{C}^{2^{n}}$, $QK^{\epsilon}(|\sigma\big>\big<\sigma|)=^{+} K(\sigma)$.
\end{lem}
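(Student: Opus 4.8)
The plan is to exploit an asymmetry that is invisible in Lemma \ref{lem:K}: once the dimension parameter $n$ is fixed, $QK^{\epsilon}$ is bounded above by a constant that does not depend on $\sigma$, whereas $K$ is unbounded on any infinite set of distinct objects. Since there are infinitely many complex algebraic pure states living in the single space $\mathbb{C}^{2^{n}}$, the two quantities cannot agree up to an additive constant, which is exactly the failure the lemma asserts.

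First I would record the upper bound. For every pure state $\sigma \in \mathbb{C}^{2^{n}}_{alg}$ the density matrix $|\sigma\big>\big<\sigma|$ acts on $\mathbb{C}^{2^{n}}$, so its dimension parameter equals $n$. Applying Theorem \ref{thm:8} directly gives
\[ QK^{\epsilon}(|\sigma\big>\big<\sigma|) \leq n + K(n) + d, \]
and, crucially, the right-hand side depends only on $n$ and the universal constant $d$, not on the particular state $\sigma$.

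Next I would exhibit infinitely many distinct complex algebraic pure states in the fixed space $\mathbb{C}^{2^{n}}$, and argue that $K$ is unbounded on them. For instance one may take the vectors supported on the first two coordinates of the form $(a,b,0,\ldots,0)$ with $a,b$ rational and $a^{2}+b^{2}=1$; the Pythagorean parametrization produces infinitely many distinct such pairs. Under the fixed canonical indexing of complex algebraic vectors these yield infinitely many distinct indices, hence infinitely many distinct objects on which $K$ is defined. By the standard counting argument (at most $2^{m+1}-1$ strings have length $\leq m$, so only finitely many objects satisfy $K(x)\leq m$), the function $K$ must take arbitrarily large values on this infinite family.

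Finally I would combine the two observations to reach a contradiction. Suppose, towards a contradiction, that $QK^{\epsilon}(|\sigma\big>\big<\sigma|) =^{+} K(\sigma)$ held for all complex algebraic pure states $\sigma$ in $\mathbb{C}^{2^{n}}$, with additive constant $c$. Then for every such $\sigma$ we would have $K(\sigma) \leq QK^{\epsilon}(|\sigma\big>\big<\sigma|) + c \leq n + K(n) + d + c$, which bounds $K$ by a single constant on the infinite family constructed above, contradicting the unboundedness of $K$. The only step requiring genuine care is this last one, namely simultaneously justifying the infinitude of algebraic pure states inside the fixed space $\mathbb{C}^{2^{n}}$ and the unboundedness of $K$ on them; the upper bound itself is immediate from Theorem \ref{thm:8}.
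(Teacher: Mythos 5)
Your proposal is correct and follows essentially the same route as the paper: apply Theorem \ref{thm:8} to bound $QK^{\epsilon}(|\sigma\big>\big<\sigma|)$ by $n+K(n)+d$ uniformly in $\sigma$, then observe that $K$ cannot be bounded on the infinitely many complex algebraic pure states in $\mathbb{C}^{2^{n}}$ since only finitely many objects have complexity below any fixed bound. The only difference is cosmetic: you exhibit an explicit infinite family (rational points on the unit circle), whereas the paper simply asserts the infinitude.
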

\begin{proof}
Clearly, for all $\sigma$, complex algebraic pure states, $QK^{\epsilon}(|\sigma\big>\big<\sigma|)\leq^{+} K(\sigma)$ holds.
Suppose that for some $\epsilon$ and $n\in \mathbb{N}$, for all $\sigma \in \mathbb{C}^{2^{n}}_{alg}$, pure states, $QK^{\epsilon}(|\sigma\big>\big<\sigma|)\geq^{+} K(\sigma)$ holds.
By Theorem \ref{thm:8}, for all $\sigma \in \mathbb{C}^{2^{n}}_{alg}$, pure,  
$K(n)+n\geq^{+}QK^{\epsilon}(|\sigma\big>\big<\sigma|)\geq^{+} K(\sigma)$. This is a contradiction as there are only finitely many programs of length atmost $n+K(n)$ but there are infinitely many complex algebraic pure states, $\sigma$ of length $n$.
\end{proof}Analogously to $QC$ \cite{Berthiaume:2001:QKC:2942985.2943376,brudno} a `counting condition' also holds for $QK$: the cardinality of a orthonormal set of vectors with bounded complexity has an upper bound depending on the complexity bound. The counting condition for $QK$ is established in a different fashion than that for $QC$ (which uses entropy inequalities like  Holevo's-chi \cite{Berthiaume:2001:QKC:2942985.2943376} and Fanne's inequality \cite{brudno}). This reflects once again that $QC$ invloves approximating a density matrix by another density matrix while $QK$ involves `projecting' a density matrix onto a  subspace.
\begin{thm}
\label{thm:count}
Let $V=(v_{i})_{i=1}^{N} \subset \mathbb{C}^{2^{s}}$ be a collection of orthonormal vectors with $QK^{\epsilon}(|v_{i}\big>\big<v_{i}|)\leq B$ for all $i$. Then, $N\leq \epsilon^{-1}2^{B}$. 
\end{thm}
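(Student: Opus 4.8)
The plan is to reduce the theorem to two ingredients already available: the multiplicity bound of Lemma~\ref{lem:30} and Kraft's inequality for the prefix-free machine $\mathbb{U}$, with the orthonormality of $(v_i)_{i=1}^N$ being exactly the hypothesis that makes Lemma~\ref{lem:30} applicable. First, for each $i$ I would fix a witness realizing $QK^{\epsilon}(|v_i\big>\big<v_i|)\leq B$: a program $\sigma_i$ and an orthonormal set $F_i\subseteq\mathbb{C}^{2^s}_{alg}$ with $\mathbb{U}(\sigma_i)\downarrow=F_i$, $\sum_{w\in F_i}\big<w|v_i\big>\big<v_i|w\big>>\epsilon$, and $|\sigma_i|+\log|F_i|\leq B$. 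Such a witness exists because the constraint $|\sigma_i|+\log|F_i|\leq B$ forces $|\sigma_i|\leq B$ (as $|F_i|\geq 1$, so $\log|F_i|\geq 0$), leaving only finitely many candidate programs and hence a genuinely attained infimum. Rewriting the length bound as $|F_i|\leq 2^{B-|\sigma_i|}$ puts it into the shape needed for the final Kraft sum.

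The central step is to bound, for each single fixed program, how many $v_i$ can use it; distinct indices may share a program, so I would partition $\{1,\dots,N\}$ by the distinct programs occurring among the $\sigma_i$. Fix such a program $\sigma$, write $F_\sigma=\mathbb{U}(\sigma)$, and let $O_\sigma=\sum_{w\in F_\sigma}|w\big>\big<w|$ be the Hermitian projection onto $\mathrm{span}(F_\sigma)$, so that $\text{Tr}(O_\sigma)=|F_\sigma|$ since $F_\sigma$ is orthonormal. Each $v_i$ assigned to $\sigma$ satisfies $\big<v_i|O_\sigma|v_i\big>=\sum_{w\in F_\sigma}\big<w|v_i\big>\big<v_i|w\big>>\epsilon$. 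Since $(v_i)_{i=1}^N$ is orthonormal it extends to an orthonormal basis $E$ of $\mathbb{C}^{2^s}$, and applying Lemma~\ref{lem:30} with $\delta=\epsilon$ shows that the $v_i$ assigned to $\sigma$ all lie in $S^{\epsilon}_{E,O_\sigma}$; hence their number is at most $|S^{\epsilon}_{E,O_\sigma}|<\epsilon^{-1}\text{Tr}(O_\sigma)=\epsilon^{-1}|F_\sigma|$.

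Finally I would sum over the set $D$ of distinct programs. Combining the per-program count with $|F_\sigma|\leq 2^{B-|\sigma|}$ gives
\[ N<\sum_{\sigma\in D}\epsilon^{-1}|F_\sigma|\leq\sum_{\sigma\in D}\epsilon^{-1}2^{B-|\sigma|}=\epsilon^{-1}2^{B}\sum_{\sigma\in D}2^{-|\sigma|}. \]
As $D$ is a set of distinct halting inputs of the prefix-free machine $\mathbb{U}$, it is prefix-free, so Kraft's inequality gives $\sum_{\sigma\in D}2^{-|\sigma|}\leq 1$ and therefore $N\leq\epsilon^{-1}2^{B}$ (indeed strictly).

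The step I expect to be the main obstacle is the per-program count in the second paragraph: one must notice that grouping by program is essential, since a naive sum $\sum_{i=1}^N 2^{-|\sigma_i|}$ is not controlled by Kraft when programs repeat, and that the orthonormality hypothesis is precisely what permits extending the $v_i$ to a basis so that Lemma~\ref{lem:30} bounds the multiplicity within each group. The remainder is bookkeeping: Lemma~\ref{lem:30} handles multiplicity within a program and Kraft handles the total weight across programs.
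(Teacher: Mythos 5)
Your proof is correct and follows essentially the same route as the paper's: fix witnesses, deduplicate them so that Kraft's inequality controls the total weight, and use orthonormality of $V$ via a trace/multiplicity bound. The only cosmetic difference is that the paper forms one global projection $F$ onto the span of the union of the distinct witness sets and concludes in a single step via $\epsilon N < \sum_{i}\big<v_{i}|F|v_{i}\big> \leq \text{Tr}(F)\leq 2^{B}$, whereas you apply Lemma~\ref{lem:30} locally within each program-class and then sum; both arguments combine Kraft's inequality with the same counting estimate.
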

\begin{proof}
For each $v_i$, we have $\sigma_i$ and $F_i$,\[F_{i}= \sum_{t\in A_{i}} |t\big>\big<t|,\]
with $A_{i}\subset \mathbb{C}^{2^{s}}$ orthonormal, such that $\big<v_{i}|F_{i}|v_{i}\big>>\epsilon$, $\mathbb{U}(\sigma_{i})=F_i$ and $|\sigma_{i}|+$log$|A_{i}|\leq B$. Let $D\subseteq \{1,2\cdots N\}$ be maximal such that $F_{i}\neq F_j$ for $i,j$ in $D$. ($D \neq \{1,2\cdots N\}$ may hold as there may be $i,j$ with $F_{i}=F_j$). Let $F$ be the orthogonal projector onto the subspace spanned by $A:=\bigcup_{i \in D}A_{i}$. Then, $A$ has dimension atmost $\sum_{i \in D}|A_{i}|$. By $|A_{i}| \leq 2^{-|\sigma_{i}|}2^{B}$ for all $i$ and noting that $\sigma_{i}\neq \sigma_j$ for $i,j$ in $D$,
\[\text{Tr}(F) \leq \sum_{i \in D}|A_{i}| \leq \sum_{i \in D} 2^{-|\sigma_{i}|}2^{B}\leq 2^{B}\sum_{\sigma \in \text{dom}(\mathbb{U})}2^{-|\sigma|} \leq  2^B.\]

The reason behind summing over $i \in D$, rather than over $i \leq N$ was to get the second to last inequality. By the maximality of $D$, $A =\bigcup_{i \leq N}A_{i}$ and so, $A_i$ is a subspace of $A$ for all $i\leq N$. So, $\big<v_{i}|F|v_{i}\big>\geq \big<v_{i}|F_{i}|v_{i}\big> >\epsilon $ for all $i \leq N$. By orthonormality of $V$,

\[\epsilon N < \sum_{i} \big<v_{i}|F|v_{i}\big> \leq \text{Tr}(F).\]
\end{proof}

\section{Quantum algorithmic randomness}
\label{sec:4}
We briefly review quantum algorithmic randomness. All definitions in this section are from \cite{unpublished} or \cite{bhojraj2020quantum} unless indicated otherwise.

While it is clear what one means by an infinite sequence of bits, it is not immediately obvious how one would formalize the notion of an infinite sequence of qubits. To describe this, many authors have independently come up with the notion of a \emph{state} \cite{unpublished,article,brudno}. We will need the one given by Nies and Scholz \cite{unpublished}. 
\begin{defn}
A \emph{state}, $\rho=(\rho_n)_{n\in \mathbb{N}}$ is an 
infinite sequence of density matrices such that $\rho_{n} \in \mathbb{C}^{2^{n} \times 2^{n}}$ and $\forall n$,  $PT_{\mathbb{C}^{2}}(\rho_n)=\rho_{n-1}$.
\end{defn}
The idea is that $\rho$ represents an infinite sequence of qubits whose first $n$ qubits are given by $\rho_n$. Here, $PT_{\mathbb{C}^{2}}$ denotes the partial trace which `traces out' the last qubit from $\mathbb{C}^{2^n}$. The definition requires $\rho$ to be coherent in the sense that for all $n$, $\rho_n$, when `restricted' via the partial trace to its first $n-1$ qubits, has the same measurement statistics as the state on $n-1$ qubits given by $\rho_{n-1}$.
The following state will be the quantum analogue of Lebesgue measure.
\begin{defn}
\label{def:tr}
Let $\tau=(\tau_n)_{n\in \mathbb{N}}$ be the state given by setting $\tau_n = \otimes_{i=1}^n I$ where $I$ is the two by two identity matrix.
\end{defn}
\begin{defn}
A \emph{special projection} is a hermitian projection matrix with complex algebraic entries.
\end{defn}
Since the complex algebraic numbers (roots of polynomials with rational coefficients) have a computable presentation, we may identify a special projection with a natural number and hence talk about computable sequences of special projections.
\begin{defn}
\label{defn:sigclass}
A quantum $\Sigma_{1}^0$
set (or q-$\Sigma_{1}^0$
set for short) G is a computable
sequence of special projections $G=(p_{i})_{i\in \mathbb{N}}$ such that $p_i$ is $2^i$ by $2^i$ and range$(p_i \otimes I) \subseteq$ range $(p_{i+1})$ for all $i\in \mathbb{N}$. 

\end{defn}
\begin{defn}
If $\rho$ is a state and $G=(p_{n})_{n\in \mathbb{N}}$ a q-$\Sigma_{1}^0$
set as above, then $\rho(G):=\lim_{n}$ Tr$(\rho_{n}p_n)$.
\end{defn}

\begin{defn}
A \emph{quantum Martin-L{\"o}f test} (q-MLT) is a computable sequence, $(S_{m})_{m \in \mathbb{N}}$ of q-$\Sigma_{1}^0$ classes such that $\tau(S_m)$ is less than or equal to $2^{-m}$ for all m, where $\tau$ is as in Definition \ref{def:tr}.
\end{defn}

\begin{defn}
$\rho$ is \emph{q-MLR} if for any q-MLT $(S_{m})_{m \in \mathbb{N}}$, $\inf_{m \in \mathbb{N}}\rho(S_m)=0$.
\end{defn}
Roughly speaking, a state is q-MLR if it cannot be `detected by projective measurements of arbitrarily small rank'.
\begin{defn}
$\rho$ is said to \emph{fail  the q-MLT $(S_{m})_{m \in \mathbb{N}}$, at order $\delta$}, if $\inf_{m \in \mathbb{N}}\rho(S_m)>\delta$. $\rho$ is said to \emph{pass  the q-MLT $(S_{m})_{m \in \mathbb{N}}$ at order $\delta$} if it does not fail it at $\delta$. 
\end{defn}
So, $\rho$ is q-MLR if it passes all q-MLTs at all $\delta>0$. Quantum Martin-L{\"o}f randomness is modelled on the classical notion: 
An infinite bitstring $X$ is said to \emph{pass} the Martin-L{\"o}f test $(U_{n})_n$ if $X \notin \bigcap_{n}U_n$ and is said to be \emph{Martin-L{\"o}f random (MLR)} if it passes all Martin-L{\"o}f tests (See 3.2.1 in \cite{misc}). 

A related notion is Solovay randomness. A computable sequence of $\Sigma^{0}_1$ classes, $(S_{n})_n$ is a \emph{Solovay test} if $\sum_{n} \mu(S_{n})$, the sum of the Lebesgue measures of the $S_n$s is finite. An infinite bitstring $X$ \emph{passes} $(S_{n})_n$ if $X\in S_n$ for infinitely many $n$ (See 3.2.18 in \cite{misc}).

We obtain a notion of a quantum Solovay test by replacing `$\Sigma^{0}_1$ class' and `Lebesgue measure' in the definition of classical Solovay tests with `quantum-$\Sigma^{0}_1$ set' and $\tau$ respectively. The following definitions are from \cite{bhojraj2020quantum} unless indicated otherwise:
\begin{defn}\label{defn:1} 
A uniformly computable sequence of quantum-$\Sigma^{0}_1$ sets, $(S^{k})_{k\in\omega}$ is a \emph{ quantum-Solovay test} if  $\sum_{k\in \omega} \tau(S^{k}) <\infty.$\end{defn}\begin{defn}\label{defn:2}
For $0<\delta<1$, a state $\rho$ \emph{fails the Solovay test $(S^k)_{k\in\omega}$ at level $\delta$} if there are infinitely many $k$ such that $\rho(S^k)>\delta$.
\end{defn}
\begin{defn}
A state $\rho$ \emph{passes the Solovay test $(S^k)_{k\in\omega}$} if for all $\delta>0$, $\rho$ does not fail $(S^k)_{k\in\omega}$ at level $\delta$. I.e., lim$_{k}\rho(S^{k})=0$.
\end{defn}
\begin{defn}
A state $\rho$ is \emph{quantum Solovay random} if it passes all quantum Solovay tests.\end{defn}
It is remarkable that $X$ is MLR if and only if it passes all Solovay tests (See 3.2.19 in \cite{misc}). This is also true in the quantum realm\cite{bhojraj2020quantum}.

An \emph{interval Solovay test} is a Solovay test, $(S_{n})_n$ such that each $S_n$ is generated by a finite collection of strings (See 3.2.22 in \cite{misc}). Its quantum version is:

\begin{defn}\cite{unpublished}
A \emph{strong Solovay test} is a computable sequence of special projections $(S^{m})_{m}$ such that $\sum_{m}\tau(S^{m}) < \infty$. A state $\rho$ fails $(S^{m})_{m}$ at $\epsilon$ if for infinitely many $m$, $\rho(S^{m})>\epsilon$.
\end{defn}
\begin{defn}\cite{unpublished}
A state $\rho$ is \emph{weak Solovay random} if it passes all strong quantum Solovay tests.\end{defn}
It is open whether weak Solovay randomness is equivalent to q-MLR. We need the notion of a computable real number to talk about Schnorr randomness: For the purposes of this paper, a function, $f$ from the natural numbers to the rationals is said to be \emph{computable} if there is a Turing machine, $\phi$ such that on input $n$, $\phi$ halts and outputs $f(n)$ (See Theorem 5.1.2 in \cite{misc1}). Note here that we interpret the output of a Turing machine as a rational number.
\begin{defn}
A sequence $(a_n)_{n\in \mathbb{N}}$ is said to be \emph{computable} if there is a computable function $f$, such that $f(n)=a_n$.

\end{defn}

\begin{defn}
A real number $r$ is said to be \emph{computable} if there is a computable function $f$ such that for all $n$, $|f(n)-r|<2^{-n}$.
\end{defn}

By 7.2.21 and 7.2.22 in \cite{misc1}, a Schnorr test may be defined as:
\begin{defn}
A \emph{Schnorr test} is an interval Solovay test, $(S^{m})_{m}$ such that $\sum_{m}\mu(S^{m}) $ is a computable real number. 
\end{defn}
An infinite bitstring passes a Schnorr test if it does not fail it (using the same notion of failing as in the Solovay test). We mimic this notion in the quantum setting.
\begin{defn}
\label{defn:Schnorr}
A \emph{quantum Schnorr test} is a strong Solovay test, $(S^{m})_{m}$ such that $\sum_{m}\tau(S^{m}) $ is a computable real number. A state is quantum Schnorr random if it passes all Schnorr tests.
\end{defn}

\section{Relating QK to randomness }
\label{sec:3} 
\subsection{A Chaitin type result}

Theorem \ref{thm:7} is a Chaitin type characterization of the weak Solovay random states in terms of $QK$. (Chaitin's result in the classical setting says that an infinite bitstring $X$ is Solovay random if and only lim$_{n}K(X\upharpoonright n) - n = \infty$). 
\begin{thm}
\label{thm:7}
A state $\rho = (\rho_{n})_{n}$ is weak Solovay random if and only if \[\forall \epsilon>0 \forall c >0 \forall^{\infty} n  QK^{\epsilon}(\rho_{n}) \geq n+c.\]
\end{thm}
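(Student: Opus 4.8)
The plan is to prove both implications in contrapositive form, after reformulating the right-hand side. Its negation reads: there exist $\epsilon>0$ and $c>0$ (take $c$ rational) such that $QK^{\epsilon}(\rho_{n})<n+c$ for infinitely many $n$. So I must show (i) if this holds then $\rho$ is not weak Solovay random, and (ii) if $\rho$ fails some strong Solovay test then this holds. These are the quantum analogues of the two halves of Chaitin's classical argument, with the Kraft inequality for $\mathbb{U}$ replacing one and the Kraft--Chaitin theorem replacing the other.

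For direction (i), fix the witnessing $\epsilon,c$ and build a strong Solovay test by enumerating halting computations of $\mathbb{U}$. Whenever $\mathbb{U}(\sigma)\!\downarrow=F$ with $F$ an orthonormal algebraic set in some $\mathbb{C}^{2^{n}}_{alg}$ satisfying the decidable side condition $|\sigma|+\log|F|<n+c$, I place the special projection $O_{\sigma}:=\sum_{v\in F}|v\rangle\langle v|$ into the test. Since $\tau(O_{\sigma})=2^{-n}|F|<2^{c}\,2^{-|\sigma|}$ for every such $\sigma$, the total weight is at most $2^{c}\sum_{\sigma\in\mathrm{dom}(\mathbb{U})}2^{-|\sigma|}\le 2^{c}<\infty$ by the Kraft inequality, so this is genuinely a strong Solovay test. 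For each of the infinitely many $n$ with $QK^{\epsilon}(\rho_{n})<n+c$, the defining witness $\sigma$ meets the side condition and satisfies $\rho(O_{\sigma})=\sum_{v\in F}\langle v|\rho_{n}|v\rangle>\epsilon$; these projections live on distinct dimensions, so $\rho$ fails the test at level $\epsilon$ and is not weak Solovay random.

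For direction (ii), suppose $\rho$ fails a strong Solovay test $(S^{m})_{m}$ at some level $\delta>0$, so $\rho(S^{m})=\mathrm{Tr}(\rho_{n_m}S^{m})>\delta$ for infinitely many $m$, where $S^{m}$ acts on $\mathbb{C}^{2^{n_m}}$. For each $m$ I diagonalize the special projection $S^{m}$ into a computable orthonormal algebraic range basis $F^{m}$, so $|F^{m}|=\mathrm{Tr}(S^{m})=2^{n_m}\tau(S^{m})$ and $\sum_{v\in F^m}\langle v|\rho_{n_m}|v\rangle=\mathrm{Tr}(\rho_{n_m}S^{m})$. Because $(S^{m})_{m}$ is computable, the dyadic rational $\tau(S^{m})$ is computable from $m$, so the request lengths $\ell_{m}:=\lceil-\log\tau(S^{m})\rceil+k$ are computable and, for a fixed $k$ with $2^{-k}\sum_m\tau(S^{m})\le 1$, satisfy $\sum_{m}2^{-\ell_{m}}\le 1$. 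By the Kraft--Chaitin theorem there is a prefix-free machine encoding each $F^{m}$ with a program of length $\ell_{m}$, whence $K(F^{m})\le \ell_m+O(1)=-\log\tau(S^{m})+O(1)$. Taking $\epsilon=\delta/2$ and feeding the corresponding $\mathbb{U}$-program into Definition \ref{defn:4},
\[
QK^{\epsilon}(\rho_{n_m})\le K(F^{m})+\log|F^{m}|\le \bigl(-\log\tau(S^{m})+O(1)\bigr)+\bigl(n_m+\log\tau(S^{m})\bigr)=n_m+O(1),
\]
so $QK^{\epsilon}(\rho_{n_m})<n_m+c$ for a suitable constant $c$. Finally, since each nonzero $S^m$ has $\tau(S^m)\ge 2^{-n_m}$, only finitely many failing $m$ can share a given dimension (else $\sum_m\tau(S^m)=\infty$), so infinitely many distinct $n$ occur, negating the right-hand side.

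The main obstacle I anticipate is the computability bookkeeping in direction (ii): extracting, uniformly in $m$, an orthonormal basis of algebraic eigenvectors for the range of the special projection $S^{m}$, and verifying that $\tau(S^{m})$ is genuinely computable so that the Kraft--Chaitin request lengths $\ell_m$ are computable with a bounded series. Everything else is a matter of matching the $\log|F|$ term against the $n+\log\tau(S^m)$ coming from the rank of $S^m$; the delicate point is keeping the additive constants uniform in $m$ while routing the finite Solovay weight $\sum_m\tau(S^m)$ through the Kraft--Chaitin machinery.
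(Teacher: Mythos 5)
Your proposal is correct and follows essentially the same route as the paper: the forward direction builds the same strong Solovay test from all halting $\mathbb{U}$-computations satisfying $|\sigma|+\log|F|<n+c$ and bounds its weight by the Kraft inequality, and the converse direction compresses the ranges $F^m$ of the test projections to length about $-\log\tau(S^m)$ so that $QK^{\epsilon}(\rho_{n_m})\le n_m+O(1)$. The only cosmetic difference is that the paper obtains the short descriptions via $K(m)\le^{+}-\log\tau(S^m)$ (describing $F^m$ from $m$), whereas you apply Kraft--Chaitin directly to requests for $F^m$; these are the same mechanism, and your closing observation that only finitely many failing indices can share a dimension is a slightly more careful justification of the paper's ``we may assume $n_m$ is strictly increasing.''
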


\begin{proof}
($\Longleftarrow)$: Suppose for a contradiction that $\rho$ fails a strong Solovay test $(S_m)_{m}$ at $\epsilon>0$. The idea will be to use the subspaces given by the $ S_m $s, to approximate $\rho$. More, precisely, the $F$ appearing in the definition of $QK^{\epsilon}$ will be the orthonormal vectors given by the projection $S_m$ for an appropriate $m$. The details are as follows.  Let $M$ be the prefix-free machine doing the following. On input $\sigma$, if $\mathbb{U}(\sigma)=m$, then output $(v_i)_{i}$ where \[S_{m}= \sum_{i} |v_{i}\big>\big<v_{i}|.\] Let $c_{M}$ be it's coding constant. Take an $m$ such that Tr$(\rho_{n_{m}}S_{m})>\epsilon$ (Notation: $n_{m}$ is the natural number $n$ such that $S_m$ is a projection on $n$ qubits.). By the choice of $m$, \[\text{QK}^{\epsilon}(\rho_{n_{m}}) \leq \text{K}(m)+ c_{M}+ \text{log}(2^{n_{m}}\tau(S_{m}))= n_{m}+ \text{K}(m) - f(m) + c_{M},\]
where $f$ is the function: $f(m)= -\text{log}(\tau(S_m))$. As $f$ is computable and as $\sum_{m} 2^{-f(m)} < \infty $ by the definition of a strong Solovay test, Lemma 3.12.2 in \cite{misc1} implies that for all $m$, $\text{K}(m) - f(m) \leq q$ for some constant $q$. Noting that we may assume the sequence $n_{m}$ to be strictly increasing in $m$ and  letting $c:= q+c_{M}+1$, we see that $\exists^{\infty} n$  such that $ \text{QK}^{\epsilon}(\rho_{n})< n+c.$ 
\\
($\Longrightarrow$):
Suppose toward a contradiction that there is a $\epsilon>0$ and a constant $c>0$ such that there are infinitely many $n$ with $ \text{QK}^{\epsilon}(\rho_{n})< n+c.$  Define a strong Solovay test $S$ as follows. Let $T$ be the set of all $\sigma$ such that $\mathbb{U}(\sigma)$ halts and outputs an orthonormal set $F_{\sigma}\subseteq  \mathbb{C}^{2^{n_{\sigma}}}$  such that $|\sigma|+$ log $|F_{\sigma}|< n_{\sigma} + c$. For all $\sigma \in T$, let \[P_{\sigma}:= \sum_{v\in F_{\sigma}} |v \big>\big<v |\] and let $S:= (P_{\sigma})_{\sigma \in T}$. For all $\sigma \in T$, $2^{|\sigma|}|F_{\sigma}|< 2^{n_{\sigma} + c}$. So, $ \tau(P_{\sigma})= 2^{-n_{\sigma}}|F_{\sigma}|< 2^{c-|\sigma| }$.  So,
\[\sum_{\sigma \in T}\tau(P_{\sigma}) < 2^{c}\sum_{\sigma \in T}2^{-|\sigma|} < 2^{c}\sum_{\sigma: \mathbb{U}(\sigma)\downarrow} 2^{-|\sigma|} < \infty, \]
since $\mathbb{U}$ is prefix-free. This shows that $S$ is a strong Solovay test. For any $n$ such that $ \text{QK}^{\epsilon}(\rho_{n})< n+c$, there is a $\sigma \in T$ such that Tr$(P_{\sigma} \rho_{n})>\epsilon$. So, $\rho$ fails $S$ at $\epsilon$.
\end{proof}
The following corollary shows the equivalence of weak Solovay and q-ML randomness for a specific type of states.
Let $B= ((b^{n}_{0},b^{n}_{1}))_{n\in \mathbb{N}}$ be a system ( Definition \ref{defn:78}). Let $A^{\infty}_B$ be the set of all states which are limits of elements from $A_B$ as in
\ref{rem:K}. For example, $b^{1}_{0}\otimes b^{2}_{1}\otimes b^{3}_{0}\otimes b^{4}_{1}\cdots =: \rho \in A^{\infty}_B$.

\begin{cor}
\label{cor}
For any $B$, weak Solovay randomness is equivalent to q-MLR on $A^{\infty}_B$.
\end{cor}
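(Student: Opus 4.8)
The plan is to identify $A^\infty_B$ with Cantor space and then route both randomness notions through the \emph{classical} Martin-L\"of randomness of the associated bitstring. For $X\in 2^\omega$ set $v^X_n:=b^1_{X_1}\otimes\cdots\otimes b^n_{X_n}$ and $\rho^X:=(|v^X_n\big>\big<v^X_n|)_n$. Since $B$ is computable and each $b^n_i$ is a unit vector, $\rho^X$ is a state (coherence holds because tracing out the last qubit of $|v^X_n\big>\big<v^X_n|$ returns $|v^X_{n-1}\big>\big<v^X_{n-1}|$), and every element of $A^\infty_B$ is of this form. So the corollary amounts to: for each $X$, $\rho^X$ is weak Solovay random iff $\rho^X$ is q-MLR. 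One implication is free and holds for all states: padding a special projection $p$ on $n$ qubits into the q-$\Sigma^0_1$ set $(0,\dots,0,p,p\otimes I,\dots)$ turns each strong Solovay test into a quantum Solovay test with the same $\tau$- and $\rho$-values, so a q-MLR state, being quantum Solovay random \cite{bhojraj2020quantum}, passes all of these and is therefore weak Solovay random. The work is the converse.

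For the converse I would first collapse weak Solovay randomness of $\rho^X$ to classical randomness of $X$. By Remark \ref{rem:K}, $QK^\epsilon(|v^X_n\big>\big<v^X_n|)=^{+}K(X\upharpoonright n)$ with an additive constant depending only on $B$ and $\epsilon$, so Theorem \ref{thm:7} gives
\[\rho^X\ \text{weak Solovay random}\ \Longrightarrow\ \forall\epsilon\ \lim_n QK^\epsilon(\rho^X_n)-n=\infty\ \Longrightarrow\ \lim_n K(X\upharpoonright n)-n=\infty,\]
and the right-hand condition is exactly the classical Chaitin characterization of $X$ being Martin-L\"of random. Hence weak Solovay randomness of $\rho^X$ implies that $X$ is classically MLR, and it remains only to prove the transfer principle $X\ \text{MLR}\Rightarrow\rho^X\ \text{q-MLR}$; chaining this with the previous paragraph closes the equivalence.

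I expect this transfer to be the main obstacle, since a general q-MLT uses projections that are not diagonal in the basis $B$. I would argue the contrapositive. Suppose $\rho^X$ fails a q-MLT $(S_m)=((p^m_i)_i)_m$ at some order $\delta$ and fix a rational $\delta_0<\delta$; diagonalize in the $B$-basis by setting $U^m_i:=\{y\in 2^i:\big<v_y|p^m_i|v_y\big>>\delta_0\}$ with $v_y:=b^1_{y_1}\otimes\cdots\otimes b^i_{y_i}$. Applying Lemma \ref{lem:30} with $E$ the $B$-basis of $\mathbb{C}^{2^i}$ bounds $|U^m_i|<\delta_0^{-1}\mathrm{Tr}(p^m_i)$; since the nesting condition forces $\mathrm{Tr}(\tau_i p^m_i)$ to be nondecreasing in $i$ with limit $\tau(S_m)\le 2^{-m}$, the classical c.e.\ open set $V_m:=\bigcup_i\bigcup_{y\in U^m_i}[y]$ satisfies $\mu(V_m)\le\delta_0^{-1}2^{-m}$. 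The product structure is what makes this diagonalization lossless for $\rho^X$: because $v^X_i$ is itself the $B$-basis vector $v_{X\upharpoonright i}$, we have $\mathrm{Tr}(\rho^X_i p^m_i)=\big<v_{X\upharpoonright i}|p^m_i|v_{X\upharpoonright i}\big>$, so $\rho^X(S_m)>\delta_0$ forces $X\upharpoonright i\in U^m_i$ for some $i$, i.e.\ $X\in V_m$ for every $m$. Thus $X\in\bigcap_m V_m$, and after reindexing $(V_m)$ to absorb the constant $\delta_0^{-1}$ this is a classical Martin-L\"of test, so $X$ is not MLR.

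The delicate points to verify are that $(V_m)$ is uniformly c.e.\ and that the thresholds are decidable: this uses that $B$ is computable and that the entries of each $p^m_i$ are complex algebraic, so the comparison $\big<v_y|p^m_i|v_y\big>>\delta_0$ of an algebraic number with a rational is decidable. One should also check the nesting $U^m_i\subseteq U^m_{i+1}$ (so that $\mu(V_m)$ is the supremum of the $2^{-i}|U^m_i|$) and the monotonicity of $\mathrm{Tr}(\tau_i p^m_i)$ and of $\mathrm{Tr}(\rho^X_i p^m_i)$, which follow from $p^m_{i+1}\ge p^m_i\otimes I$ (the defining inclusion of a q-$\Sigma^0_1$ set) together with coherence, and which justify passing to the limit $\rho^X(S_m)$.
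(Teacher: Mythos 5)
Your proof is correct and follows the same route as the paper's: identify a weak Solovay random $\rho\in A^\infty_B$ with its induced bitstring $X$, use Remark \ref{rem:K} and Theorem \ref{thm:7} to get $\lim_n K(X\upharpoonright n)-n=\infty$, invoke Chaitin's characterization to conclude $X$ is MLR, and then transfer back to conclude $\rho$ is q-MLR; the converse direction is the known implication q-MLR $\Rightarrow$ weak Solovay random. The one substantive difference is that where the paper cites ``an easy modification of 3.13 from \cite{unpublished}'' for the transfer $X$ MLR $\Rightarrow\rho^X$ q-MLR, you supply a self-contained proof by diagonalizing each q-MLT in the $B$-basis via Lemma \ref{lem:30}, checking the nesting $U^m_i\subseteq U^m_{i+1}$ from $p^m_{i+1}\ge p^m_i\otimes I$ and the decidability of the algebraic-vs-rational comparisons; this argument is sound and makes the corollary independent of the external reference, at the cost of some length. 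You likewise prove rather than cite the easy direction (padding strong Solovay tests into quantum Solovay tests), which is also correct.
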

\begin{proof}
Fix a system $B= ((b^{n}_{0},b^{n}_{1}))_{n\in \mathbb{N}}$ and let $\rho\in A^{\infty}_B$ be weak Solovay random. Let $\rho'$ be the bitstring induced by $\rho$. I.e., for example if $\rho=b^{1}_{0}\otimes b^{2}_{1}\otimes b^{3}_{0}\otimes b^{4}_{1} \cdots$, then $\rho':= 0101\cdots$.  By Theorem \ref{thm:7}, for $\epsilon=0.5$ \[  \forall c >0 \forall^{\infty} n  QK^{0.5}(\rho_{n}) \geq n+c.\] By Remark \ref{rem:K}, $K(\rho'\upharpoonright n)= MK^{0.5}(\rho_{n})$ upto a constant depending only on $B$. So, 
\[  \forall c >0 \forall^{\infty} n  K(\rho'\upharpoonright n) \geq n+c.\]
By Chaitin's result, \cite{Chaitin:1987:ITR:24912.24913} $\rho'$ is MLR. Now, by an easy modification of 3.13 from \cite{unpublished}, $\rho$ is q-MLR. We already know that q-MLR implies weak Solovay randomness for any state from before.
\end{proof}

\subsection{Chaitin and Levin\textendash Schnorr type results}
It turns out that weak Solovay randomness is equivalent to q-MLR and has both Chaitin ((3) in Theorem \ref{thm:39} ) and Levin\textendash Schnorr ((4) in Theorem \ref{thm:39}) type characterizations in terms of $QK$ when the states are restricted to a certain class, $\mathcal{L}$ defined below. To define this class we need to consider the halting set over the halting set : $\emptyset''=(\emptyset')'$ (See\cite{misc}). Let $\mathcal{L}$ denote the union of the two classes of states.
\begin{enumerate}
    \item States in $A^{\infty}_B$ for some $B$, as in Corollary \ref{cor}
    \item States which do not Turing compute $\emptyset''$.
    \end{enumerate}
 Nies and Barmpalias (in personal communication) have shown that q-MLR is equivalent to weak quantum Solovay randomness for states which do not compute $\emptyset''$. The same equivalence also holds on $A^{\infty}_B$ by Corollary \ref{cor}. This similarity motivates our study of $\mathcal{L}$. 
\begin{thm}
\label{thm:39}
If $\rho=(\rho_{n})_{n} \in  \mathcal{L}$, then the following are equivalent
\begin{enumerate}
\item $\rho$ is q-MLR.
    \item $\rho$ is weak Solovay random.
    \item $ \forall \epsilon>0 \forall c >0 \forall^{\infty} n,  QK^{\epsilon}(\rho_{n}) \geq n+c.$
    \item $\forall \epsilon>0\exists c \forall n, QK^{\epsilon}(\rho_{n})>n-c .$
\end{enumerate}
\end{thm}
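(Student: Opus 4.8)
The plan is to run the four conditions around the single cycle $(1)\Rightarrow(2)\Rightarrow(3)\Rightarrow(4)\Rightarrow(1)$, noting first that three of the four implications are essentially free once Theorem \ref{thm:7} is available, so that all the genuine content lies in closing the loop through the Levin--Schnorr condition $(4)$. Indeed $(2)\iff(3)$ is exactly Theorem \ref{thm:7} and holds for every state; $(1)\Rightarrow(2)$ is the already-noted fact that q-MLR implies weak Solovay randomness for every state; and $(3)\Rightarrow(4)$ is immediate, since for each fixed $\epsilon$ the statement $\lim_n\bigl(QK^{\epsilon}(\rho_n)-n\bigr)=\infty$ forces $QK^{\epsilon}(\rho_n)-n$ to be bounded below, which is $(4)$. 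Thus the only implication carrying real weight is $(4)\Rightarrow(1)$, and this is the one place where membership in $\mathcal{L}$ is used.

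To prove $(4)\Rightarrow(1)$ I would split along the two classes whose union is $\mathcal{L}$. In the first case $\rho\in A^{\infty}_B$ for some system $B$; let $\rho'$ be the induced classical bitstring. Fixing $\epsilon=1/2$ and applying Remark \ref{rem:K} termwise yields $QK^{1/2}(\rho_n)=^{+}K(\rho'\upharpoonright n)$ with an additive constant depending only on $B$, so that $(4)$ translates verbatim into the classical Levin--Schnorr condition $\exists c\,\forall n\,K(\rho'\upharpoonright n)>n-c$. The classical Levin--Schnorr theorem then makes $\rho'$ Martin--L\"of random, and the modification of Lemma 3.13 of \cite{unpublished} used in Corollary \ref{cor} upgrades this to q-MLR of $\rho$, which is $(1)$.

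The second case, $\rho$ not computing $\emptyset''$, is where I expect the main obstacle. Here I would argue the contrapositive $\neg(1)\Rightarrow\neg(4)$, using the Nies--Barmpalias equivalence of q-MLR and weak Solovay randomness on this class together with the proof of Theorem 2.11 of \cite{bhojraj2020quantum}. If $\rho$ is not q-MLR it fails some q-MLT $(S_m)_m$ at a positive level $\delta_0$, so $\rho(S_m)\geq\delta_0$ for all $m$; writing $S_m=(p^m_i)_i$, the coherence of $\rho$ makes $\mathrm{Tr}(\rho_i p^m_i)$ nondecreasing in $i$ with supremum $\geq\delta_0$, while $\tau(S_m)\leq 2^{-m}$ forces $\mathrm{rank}(p^m_i)\leq 2^{\,i-m}$. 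For each $m$ I would select an $i$ with $\mathrm{Tr}(\rho_i p^m_i)>\delta_0/2$, write $p^m_i=\sum_{v\in F}|v\rangle\langle v|$, and bound $QK^{\delta_0/2}(\rho_i)\leq K(F)+\log|F|\leq K(F)+(i-m)$, so that $QK^{\delta_0/2}(\rho_i)-i\leq K(F)-m+O(1)$. The crucial $-m$ is precisely the gain that a q-MLT provides over a general strong Solovay test, and it is what should drive the right-hand side to $-\infty$ and deliver $\neg(4)$ with witness $\epsilon=\delta_0/2$.

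The difficulty, and the reason the argument does not run for arbitrary states, is that the stopping level $i=i(m)$ is chosen by a search involving $\rho$ itself and need not be computable, so a priori $K(F)$ could be as large as $K(m,i(m))$ and swamp the $-m$ term. The main work, which I would carry out by importing the construction from the proof of Theorem 2.11 of \cite{bhojraj2020quantum}, is to exploit the hypothesis that $\rho$ does not compute $\emptyset''$ in order to extract from the failing q-MLT a genuinely computable family of special projections of rank at most $2^{\,i-m}$ that $\delta_0/2$-approximate $\rho_i$ at a sequence of lengths whose descriptions cost only $O(K(m))$. With $K(F)=^{+}K(m)$ the bound above becomes $QK^{\delta_0/2}(\rho_i)-i\leq K(m)-m+O(1)\to-\infty$, which establishes $\neg(1)\Rightarrow\neg(4)$ and thereby closes the cycle.
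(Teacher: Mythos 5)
Your reduction of the theorem to the single implication $(4)\Rightarrow(1)$ is the right architecture: $(2)\Leftrightarrow(3)$ is Theorem \ref{thm:7}, $(1)\Rightarrow(2)$ and $(3)\Rightarrow(4)$ are free, and your treatment of the case $\rho\in A^{\infty}_B$ coincides with the paper's. The gap is in the second case. You correctly name the obstacle --- the stopping level $i(m)$ at which $\mathrm{Tr}(\rho_i p^m_i)$ first exceeds the threshold is only $\rho$-computable, so a priori $K(F)$ is as large as $K(m,i(m))$ --- but you do not supply the idea that removes it, and the construction you propose to import (the proof of Theorem 2.11 of \cite{bhojraj2020quantum}) does a different job: in the paper it only upgrades failure of a strong Solovay test to failure of a q-MLT $(G^m)_m$ at some rational $\epsilon$, and says nothing about taming the stopping time. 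The missing ingredient is Martin's high domination theorem. The paper sets $g(m):=$ the least $s$ with $\mathrm{Tr}(\rho_s G^{2m}_s)>\epsilon$; this $g$ is computable from $\rho$, and the hypothesis on $\emptyset''$ is used precisely to produce a \emph{total computable} $f$ with $f(t)>g(t)$ for infinitely many $t$ (one may also take $f(t)>3t$). The family $t\mapsto G^{2t}_{f(t)}$ is then genuinely computable, a prefix-free machine outputs its $t$-th member on input $0^t1$ at cost $t+O(1)$, and the doubled index gives the rank bound $|F^t|\le 2^{f(t)-2t}$, whence $QK^{\epsilon}(\rho_{f(t)})\le f(t)-t+O(1)$ at infinitely many $t$, contradicting $(4)$. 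Without an argument of this kind your assertion that the descriptions ``cost only $O(K(m))$'' is unsupported, and it is exactly the step where membership in the second class of $\mathcal{L}$ does its work; the rest of your outline is correct.
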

\begin{proof}
(1)$\iff$(2) follows from the previous remarks.\\
(4)$\Longrightarrow$(1):\begin{proof} First, let $\rho \in A^{\infty}_B$ for some $B$ and let (4) hold. By the same argument as in Corollary \ref{cor}, we get that
\[  \exists c >0 \forall  n  ,K(\rho'\upharpoonright n) \geq n-c .\]
The classical Levin\textendash Schnorr result \cite{misc1} implies that $\rho'$ is MLR. Using once more 3.13 in \cite{unpublished} as in \ref{cor}, we see that $\rho$ is q-MLR. Now suppose $\rho$ does not Turing compute $\emptyset''$. We will show that (4) implies (2). Suppose for a contradiction that $(S^{m})_{m}$ is a strong Solovay test which $\rho$ fails at $\epsilon'>0$. By Theorem 2.11 in \cite{bhojraj2020quantum}, we can effectively compute a q-MLT $(G^{m})_{m}$ which $\rho$ fails at some rational $\epsilon>0$. Let $g(m):=$ the least $s$ such that Tr$(\rho_sG^{2m}_{s})>\epsilon$. As $\rho$ computes $g$, by Martin's high domination theorem (see \cite{misc1} for a proof), there is a total computable function $f$ such that $\exists^{\infty} g(n)<f(n)$. We may assume that $f(t)>3t$ for all $t$ by taking the max of 2 computable functions. Fix this $f$ (non-uniformly) and consider the following machine, $M$:\\ On input $0^{m}1$, $M$ outputs $F^{m}$ where $F^{m}$ is such that
\[G^{2m}_{f(m)}= \sum_{v \in F^{m}}|v\big>\big<v|.\]
Clearly $M$ is prefix free. Let $l-1$ be it's coding constant. Let $t$ be so that $f(t)>g(t)$. Let $F^t$ be defined similarly to $F^m$ above. Then, by definition of $g$, we have that \[\epsilon< \sum_{v \in F^{t}}\big<v|\rho_{f(t)}|v\big>.\] $M(0^{t}1)=F^{t}$ and so, there is a $\pi$ such that $|\pi|\leq t+l $ and $\mathbb{U}(\pi)=F^{t}$. Also note that $|F^{t}|\leq 2^{f(t)-2t}$ by the definition of a q-MLT. So,
$
\text{QK}^{\epsilon}(\rho_{f(t)}) \leq t+l +f(t)-2t=f(t)-t+l.$

Recall that $t$ was an arbitrary element of the infinite set $\{s: f(s)>g(s)\}$. So, for infinitely many $t$s, there is an $n=f(t)$ such that $\text{QK}^{\epsilon}(\rho_{n}) \leq n-t+l,$  contradicting $(4)$.
\end{proof}
$(3)\Longrightarrow(4)$ is obvious and $(2)\Longrightarrow(3)$ was done in Theorem \ref{thm:7}.
\end{proof}
We apply the preceding theorem to get the following quantum analog of a classical result, Proposition 3.2.14 in \cite{misc}.
\begin{thm}
\label{thm:compset}
Let $C$ be an infinite computable set, $\rho \in \mathcal{L}$ and $\epsilon>0$. If there is a $d$ such that for all $m\in C$, $QK^{\epsilon}(\rho_m)>m-d$, then $\rho$ is weak Solovay random.
\end{thm}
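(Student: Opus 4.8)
The plan is to deduce the conclusion from the equivalence in Theorem \ref{thm:39}: since $\rho \in \mathcal{L}$, it suffices to verify the Levin--Schnorr condition (4) there. The hypothesis only controls $QK^\epsilon(\rho_m)$ along the computable set $C$, so the heart of the argument is an extension step that upgrades this to a bound valid at \emph{every} length $n$. Fix $n$ and let $m=m(n)$ be the least element of $C$ with $m\geq n$; as $C$ is computable and infinite, $m(n)$ is computable from $n$. The key observation is that any orthonormal $F\subseteq \mathbb{C}^{2^n}_{alg}$ witnessing $QK^\epsilon(\rho_n)$ pads to an orthonormal set $F':=\{\,v\otimes e_j : v\in F,\ 1\leq j\leq 2^{m-n}\,\}\subseteq \mathbb{C}^{2^m}_{alg}$, where $(e_j)$ is the standard basis of $\mathbb{C}^{2^{m-n}}$. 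Since $\rho$ is coherent, iterating $PT$ sends $\rho_m$ to $\rho_n$, so the partial-trace identity gives $\sum_{w\in F'}\langle w|\rho_m|w\rangle=\sum_{v\in F}\langle v|\rho_n|v\rangle>\epsilon$; thus $F'$ is a legitimate witness for $QK^\epsilon(\rho_m)$ at the \emph{same} threshold.

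Feeding a $\mathbb{U}$-program for $F$ into a fixed prefix-free machine that reads off $n$, recomputes $m(n)$ from the algorithm for $C$, and outputs $F'$, yields $QK^\epsilon(\rho_m)\leq^+ QK^\epsilon(\rho_n)+(m-n)$. Rearranging and using $QK^\epsilon(\rho_m)>m-d$ (valid since $m\in C$) gives a single constant $c$ with $QK^\epsilon(\rho_n)>n-c$ for all $n$. The gap $m-n$ cancels exactly and, crucially, \emph{no} $K(m-n)$ term appears, precisely because the machine regenerates $m$ from $n$ using the computability of $C$; this is the only place $C$ being computable is used. Monotonicity of $QK^{\epsilon'}$ in $\epsilon'$ (raising the threshold only shrinks the admissible witnesses) then upgrades the bound to every $\epsilon'\geq\epsilon$. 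To reach condition (4) in full, and hence weak Solovay randomness, I would split along the definition of $\mathcal{L}$: if $\rho\in A^\infty_B$, Remark \ref{rem:K} converts the bound into the classical statement $K(\rho'\upharpoonright n)>n-c$ for the induced bitstring $\rho'$, whence classical Levin--Schnorr gives $\rho'$ Martin-L\"of random and the transfer of 3.13 in \cite{unpublished} (as in Corollary \ref{cor}) makes $\rho$ q-MLR; if instead $\rho$ does not compute $\emptyset''$, I would re-run the small-rank dip argument from (4)$\Rightarrow$(1) in Theorem \ref{thm:39}, using Theorem 2.11 of \cite{bhojraj2020quantum} and Martin's domination theorem to produce lengths where $QK$ drops arbitrarily far below the diagonal.

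The step I expect to be the main obstacle is matching thresholds in this last case. The extension delivers the Levin--Schnorr bound only for $\epsilon'\geq\epsilon$, whereas a non-random $\rho$ may fail a test at some small order $\beta<\epsilon$, so the resulting dip lives in $QK^\beta$ rather than $QK^\epsilon$ and does not directly contradict the bound we control; indeed, subsection \ref{subsect:weak} already warns that for \emph{general} states the single-threshold Levin--Schnorr condition yields something strictly weaker than weak Solovay randomness. Resolving this is exactly where membership in $\mathcal{L}$ must do work: either through an amplification showing that for $\rho\in\mathcal{L}$ any failure of q-MLR is witnessed by a q-MLT of order at least $\epsilon$, or through bookkeeping in the domination argument that keeps the measurement probability above $\epsilon$. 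Since the $A^\infty_B$ case collapses to a threshold-free classical characterization, I would organize the proof so that all threshold-sensitive reasoning is confined to the non-$\emptyset''$ branch.
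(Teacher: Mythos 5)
Your extension step is exactly the paper's: it defines a prefix-free machine $M$ which, on a $\mathbb{U}$-program for a witness $F\subseteq\mathbb{C}^{2^{n}}$, computes the least $t=n+s$ in $C$ above $n$ and outputs $T=\{v\otimes\pi: v\in F,\ \pi\in 2^{s}\}$, uses coherence to get $\mathrm{Tr}(\rho_{t}(W\otimes I))=\mathrm{Tr}(\rho_{n}W)>\epsilon$, and notes that the gap $s$ cancels exactly, giving $QK^{\epsilon}(\rho_{t})\leq QK^{\epsilon}(\rho_{n})+s+l$. The only structural difference is direction: the paper argues contrapositively, assuming $\rho$ is not weak Solovay random, invoking Theorem \ref{thm:39} to produce dips $QK^{\epsilon}(\rho_{n_{c}})\leq n_{c}-c$, and padding them into $C$ to contradict the hypothesis; it does not re-run the two branches of $\mathcal{L}$ as you propose. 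Regarding the obstacle you flag: it is genuine as a quantifier issue, but the paper does not resolve it either --- it simply instantiates the failure of condition (4) at the theorem's own $\epsilon$, even though the negation of (4) a priori only supplies \emph{some} threshold $\epsilon_{0}$, and a dip in $QK^{\epsilon_{0}}$ for $\epsilon_{0}<\epsilon$ gives no upper bound on $QK^{\epsilon}$. Your instinct that the $A^{\infty}_B$ branch is threshold-free (via Remark \ref{rem:K}) while the non-$\emptyset''$ branch is where the threshold matters is accurate; but the cleanest reading is that the hypothesis should be taken for every $\epsilon>0$ (or one accepts the paper's implicit instantiation), after which your forward argument and the paper's contrapositive one coincide. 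In short: same proof, modulo contraposition, and the worry you isolate is a feature of how Theorem \ref{thm:39} is quoted rather than a defect of your padding step.
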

\begin{proof}
Let $M$ be the machine doing the following: On input $\sigma$, check if $\mathbb{U}(\sigma)=F$, an orthonormal set $F\subseteq \mathbb{C}^{2^{n}}$. If such a $F$ and $n$ exist, compute $s$ such that $n+s$ is the least element of $C$ greater than $n$ and output the set:
\[T:= \{v \otimes \pi: v \in F, \pi \in 2^{s}\}.\]
Note that $|T|=2^{s}|F|$. It is easy to see that $M$ is prefix-free. Let $l$ be it's coding constant. Suppose for a contradiction that $\rho$ is not weak Solovay random. \ref{thm:39} implies that $\forall c$, $\exists n_{c}$ such that $QK^{\epsilon}(\rho_{n_{c}}) \leq n_{c}-c.$ Let $c$ be arbitrary and take such an $n:=n_c$. There is a $\sigma$ and $F$ such that $\mathbb{U}(\sigma)=F \in \mathbb{C}^{2^{n}}$, $|\sigma|+$log$(|F|)\leq n-c$ and \[\sum_{v\in F} \big<v|\rho_{n}|v\big> >\epsilon.\] Let $t=n+s$ be the least element of $C$ greater than $n$. On input $\sigma$, $M$ outputs $T$ as above.
Note that \[Q:=\sum_{w\in T}|w\big>\big<w|= \sum_{v\in F, \pi\in 2^{s}}|v\big>\big<v|\otimes |\pi\big>\big<\pi|=\big(\sum_{v\in F }|v\big>\big<v|\big)\otimes \big( \sum_{\pi\in 2^{s}}|\pi\big>\big<\pi|\big)=W\otimes I,
\]
where $W:=\sum_{v\in F}|v\big>\big<v|$ and $I$ be the identity on $\mathbb{C}^{2^{s}}$.
Then, by the coherence property of states, \[\sum_{w\in T} \big<w|\rho_{t}|w\big>=\text{Tr}(\rho_{t}Q)=\text{Tr}(\rho_{t}[W\otimes I])=\text{Tr}(\rho_{n}W)>\epsilon.\]
Consequently, \[QK^{\epsilon}(\rho_{t})\leq |\sigma|+  \text{log}(|T|)+ l = |\sigma|+  \text{log}(|F|)+s+ l\leq n-c+s+l=t-c+l.\]
Since $d$ and $l$ were constants and $c$ was arbitrary, this contradicts the assumption.
\end{proof}

\subsection{A weak Levin\textendash Schnorr type result}
\label{subsect:weak}
Theorem \ref{thm:7} implies that if $\rho$ is weak-Solovay random then, $ \forall \epsilon>0\exists c \forall n, QK^{\epsilon}(\rho_{n})>n-c$. I.e., being strong-Solovay random implies the Levin\textendash Schnorr  condition. Does this reverse? We give two partial results in this direction: the Levin\textendash Schnorr condition implies that $\rho$ passes all strong-Solovay tests of a certain type.
\begin{defn}
For a rational $s\in(0,1)$, a $s$-strong Solovay test is a strong Solovay test $(S^{r})_{r}$ such that
$\sum_{r}\tau(S^{r})^{s} < \infty $ and $\sum_{r}\tau(S^{r}) $ is a computable real number.
\end{defn}

\begin{thm}
\label{thm:mkssr}
If $ \forall \epsilon>0\exists c \forall n, QK^{\epsilon}(\rho_{n})>n-c$, then $\rho$ passes all $s$-strong Solovay tests for all rational $s\in (0,1)$.
\end{thm}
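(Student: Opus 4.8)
The plan is to prove the contrapositive, adapting the prefix-free machine from the $(\Longleftarrow)$ direction of Theorem \ref{thm:7} and exploiting the exponent $s<1$ in an essential way. Suppose $\rho$ fails some $s$-strong Solovay test $(S^{r})_{r}$ at a rational $\epsilon'>0$, so that $\text{Tr}(\rho_{n_r}S^r)>\epsilon'$ for infinitely many $r$ (say $S^r$ acts on $n_r$ qubits; call such $r$ \emph{good}). I would reuse the machine $M$ of Theorem \ref{thm:7}: on input $\sigma$ with $\mathbb{U}(\sigma)=r$, let $M$ output an orthonormal set $F_r\subseteq\mathbb{C}^{2^{n_r}}_{alg}$ spanning the range of the special projection $S^r$, so that $S^r=\sum_{v\in F_r}|v\rangle\langle v|$ and $|F_r|=\text{Tr}(S^r)=2^{n_r}\tau(S^r)$. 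Writing $f(r):=-\log\tau(S^r)$, for every good $r$ we feed $M$ a shortest $\mathbb{U}$-program for $r$ and obtain, via Definition \ref{defn:4},
\[ QK^{\epsilon'}(\rho_{n_r})\le K(r)+c_M+\log|F_r| = n_r + \bigl(K(r)-f(r)\bigr)+c_M, \]
exactly as in Theorem \ref{thm:7}, where $c_M$ is the coding constant of $M$.

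The new ingredient is the hypothesis $\sum_r \tau(S^r)^s<\infty$. Since $(S^r)_r$ is computable, $\tau(S^r)=|F_r|2^{-n_r}$ is a computable (dyadic rational) sequence, and $\tau(S^r)^s=2^{-sf(r)}$ is a computable sequence of algebraic reals. I would pass to the integer surrogate $h(r)$ defined by $2^{-h(r)}\le \tau(S^r)^s<2^{-h(r)+1}$, which is computable from $r$ by comparing $\tau(S^r)^s$ against powers of $2$ (equivalently, comparing the rational $\tau(S^r)^p$ against powers of $2$, where $s=p/q$). By construction $\sum_r 2^{-h(r)}\le\sum_r\tau(S^r)^s<\infty$ and $h(r)\le sf(r)+1$. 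The coding theorem (Lemma 3.12.2 in \cite{misc1}), applied just as in Theorem \ref{thm:7}, then yields a constant $\ell$ with $K(r)\le h(r)+\ell\le sf(r)+1+\ell$, whence
\[ K(r)-f(r)\le -(1-s)f(r)+\ell+1. \]

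Finally I would observe that $\sum_r\tau(S^r)^s<\infty$ forces $\tau(S^r)\to 0$, i.e.\ $f(r)\to\infty$ over all $r$; as the good $r$ form an infinite set and $1-s>0$, this makes $-(1-s)f(r)\to-\infty$ along the good $r$. Combining with the first display, $QK^{\epsilon'}(\rho_{n_r})-n_r\to-\infty$ along the good $r$, so for every constant $c$ there is an $n=n_r$ with $QK^{\epsilon'}(\rho_n)\le n-c$. This contradicts the hypothesis $\forall\epsilon>0\,\exists c\,\forall n\,\,QK^{\epsilon}(\rho_n)>n-c$ at $\epsilon=\epsilon'$, completing the contrapositive.

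The main obstacle, and the only genuinely new point relative to Theorem \ref{thm:7}, is the quantitative use of $s<1$: mere convergence of $\sum_r\tau(S^r)$ (as in Theorem \ref{thm:7}) yields only $K(r)-f(r)\le q$, a \emph{bounded} gap sufficient to defeat the Chaitin-type condition, whereas the stronger $s$-summability upgrades this to $K(r)-f(r)\to-\infty$, which is exactly what is required to contradict the Levin\textendash Schnorr condition. The remaining points are routine: the computability of $h$ (handled by exact comparison of algebraic/rational quantities with powers of $2$) and the prefix-freeness of $M$, both following as in Theorem \ref{thm:7}.
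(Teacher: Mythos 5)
Your proposal is correct, and it reaches the conclusion by a genuinely different and noticeably leaner route than the paper. The paper's proof partitions the indices into the fibers of $g(r)=\lceil s f(r)\rceil$, uses the computability of $\sum_r\tau(S^r)$ to compute each (finite) fiber, and describes $S^r$ by a two-part code: a shortest program for $0^{g(r)}$ (compressed via the bounded request set $\{(\lceil nd\rceil,0^n):n\in\omega\}$ with $d=(1-s)s^{-1}$) followed by the index of $r$ inside its fiber; the contradiction is then driven by $\log|P_m|-g(r_m)\to-\infty$. You instead make a single application of the coding theorem (the same Lemma 3.12.2 already invoked in Theorem \ref{thm:7}) to the integer weight $h(r)\approx sf(r)$, whose summability $\sum_r 2^{-h(r)}\le\sum_r\tau(S^r)^s<\infty$ is exactly the $s$-test hypothesis; this gives $K(r)\le sf(r)+O(1)$ directly, hence $QK^{\epsilon'}(\rho_{n_r})-n_r\le -(1-s)f(r)+O(1)\to-\infty$ along the good $r$, with no fibers and no two-part code. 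Your diagnosis of where $s<1$ enters is exactly right, and the computability of $h$ via comparison of $\tau(S^r)^p$ with powers of $2$ is sound (it is essentially the paper's own $g$). One notable dividend of your route: you never use the clause that $\sum_r\tau(S^r)$ is a computable real, so you in fact prove the slightly stronger statement that the Levin\textendash Schnorr condition defeats every strong Solovay test with $\sum_r\tau(S^r)^s<\infty$, computable sum or not. The paper's heavier fiber machinery is not wasted globally --- it is what gets recycled in the subsequent $(\phi,s)$ refinement, where the extra $\phi$ factor makes the direct weight-function bound less immediate --- but for the theorem as stated your argument suffices and is simpler.
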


\begin{proof}
Suppose for a contradiction that $(S^{m})_{m}$ is a $s$-strong Solovay test which $\rho$ fails at $\epsilon>0$ and $\sum_{i} \tau(S^{i})=Q $, computable. For all $m$, let $S^m$ be $2^{n_{m}}$ by $2^{n_{m}}$ and we may let the $n_m$s be distinct. Let $f(m):=-$ log$(\tau(S^m))$ and $g(m):=\lceil s f(m) \rceil$.
Partition $\omega$ into the fibers induced by $g$. ($P$ is a fiber of $g$ if $P=g^{-1}(\{x\})=\{y: g(y)=x\},$ for some $x$.). Note that $g(r)\geq  -s \text{log}\tau(S^{r})$,
and hence
$2^{-g(r)}\leq \tau(S^{r})^{s} .$ In particular, this implies that $P$ is finite. So, there are  countably infinitely many fibers,  $\{P_{1},P_{2}\dots\}$ and $\omega = \bigcup_{m} P_{m},$
where for all $m$, there is an $x_m$ such that $P_{m}=g^{-1}(\{x_{m}\})$ and $m \mapsto x_{m}$ is injective.

The fiber $P$ of $x=g(z)$ can be computed from $x$ as follows. Note that $g(c)=x$ iff, $f(c) \in [s^{-1}(x-1), s^{-1}x]$. As $Q$ is computable, compute an interval $J$ such that, $|J|<2^{-s^{-1}x}$,  $Q\in J$ and $\sum_{r\leq q} 2^{-f(r)}\in J $ for some $q$. So, $f(c)>s^{-1}x$ if $c>q$. $P$ can be computed by evaluating $g$ on $[0,q]$.

The idea is to describe $S^r$ by computing the fiber $P$ containing $r$ and then specifying the location of $r$ in the lexicographical ordering on $P$. As $(S^r)_r$ is a $s$-strong Solovay test, this description of $S^r$ is short enough to derive a contradiction. Consider the  machine, $M$ doing the following:  On input $\lambda$, check if there is a decomposition $\lambda=\pi\sigma$  such that,
\begin{itemize}
    \item There is an $m$ such that $\mathbb{U}(\pi)=0^{g(m)}$.
    \item $|\sigma|=t=\lceil\text{log}(|P|)\rceil$ where  $P$ is the fiber of $g(m)$. (Recall that $P$ can be computed from  $g(m)$.) 
    
\end{itemize}

 If these hold, then order $P$ lexicographically using the ordering on $2^t$ and let $r$ be the $\sigma^{th}$ element in this ordering. Output $F^{r}$ where $F^{r}$ is such that $S^{r}= \sum_{v \in F^{r}}|v\big>\big<v|.$
 
Note that $M$ is prefix free: Suppose $\lambda,\lambda' \in $ dom$(M)$ as witnessed by $\lambda=\pi\sigma$ and $\lambda=\pi'\sigma'$. So, $M$ finds $m$ and $m'$ such that $\mathbb{U}(\pi)=0^{g(m)}$ and $\mathbb{U}(\pi')=0^{g(m')}$. Let $\pi\sigma \preceq \pi'\sigma'$. Then, it must be that $\pi\preceq\pi'$ or $\pi'\preceq\pi$. Since $\mathbb{U}$ is prefix free, it follows that $\pi=\pi'$. So, $g(m)=g(m')$. Hence, $m$ and $m'$ are in the same fiber, $P$. Letting $t= \lceil \text{log}(|P|)\rceil$, $\lambda,\lambda' \in $ dom$(M)$ implies that $|\sigma|=|\sigma'|=t$.

For each $m\in \omega$, let $r_m$ be any element from $P_m$. Then,
\begin{align}
\label{eq:mkssr3}
    \sum_{m}|P_{m}|2^{-g(r_{m})} = \sum_{m}\sum_{r\in P_{m}} 2^{-g(r)} \leq \sum_{m}\sum_{r\in P_{m}} \tau(S^{r})^{s}= \sum_{i} \tau(S^{i})^{s}< \infty.
\end{align}

Let $h$ be the function defined by, $h(m):=$ log$(|P_{m}|)-g(r_{m})$ where $r_m$ is any representative from $P_m$. By \eqref{eq:mkssr3},
$|P_{m}|2^{-g(r_{m})} \rightarrow 0$ as $m\rightarrow \infty$. So, $h(m) \rightarrow -\infty$ as $m\rightarrow \infty$.
Each fiber is finite and $\rho$ fails the test at $\epsilon$. So, there is an infinite set $I= \{P_{j_{1}}, P_{j_{2}},\dots\}$ such that for all $i$, there is a $t_{i}\in P_{j_{i}}$ with  
Tr$(\rho_{n_{t_{i}}}S^{t_{i}})>\epsilon$.  $j_{i} \rightarrow  \infty$ as $i\rightarrow \infty$ and so, $h(j_{i})= $ log$(|P_{j_{i}}|)-g(r_{t_{i}}) \rightarrow -\infty$ as $i\rightarrow \infty$. This asymptotic behavior will be used below to derive a contradiction.

Fix an arbitrary $i$ and a $t=t_{i}\in P_{j_{i}}$ as above.
So, $
\epsilon< \sum_{v \in F^{t}}\big<v|\rho_{n_{t}}|v\big>.$

Let $t$ be the $\sigma^{th}$ element of $P_{j_{i}}$ in the lexicographic ordering used by $M$ and let $\mathbb{U}(\pi)=0^{g(t)}$ and $K(0^{g(t)})=|\pi|$. Then,
$M(\pi\sigma)=F^{t}$ and so, there is a bitstring $\iota$ such that $|\iota|\leq^{+} K(0^{g(t)})+ \lceil \text{log}(|P_{j_{i}}|)\rceil $ and $\mathbb{U}(\iota)=F^{t}$. Note that log$|F^{t}|=$ log(Tr($S^t))=$ log$(\tau(S^{t})) + n_{t}= -f(t) + n_{t}$. Let $d:=(1-s)s^{-1} >0$. So, $\{(\lceil nd \rceil, 0^{n}): n \in \omega\},$
is a bounded request set (see \cite{misc1} for a definition) and hence $K(0^{n})\leq^{+}\lceil nd \rceil$. Using all this, we get that:
\begin{align*}
\text{QK}^{\epsilon}(\rho_{n_{t}})
&\leq^{+} K(0^{g(t)})+\lceil \text{log}(|P_{j_{i}}|)\rceil  -f(t) +  n_{t}\\
&\leq^{+} \lceil d g(t)\rceil+\lceil \text{log}(|P_{j_{i}}|)\rceil -f(t) +  n_{t}\\
&\leq^{+} dsf(t)+\lceil \text{log}(|P_{j_{i}}|)\rceil -f(t) +  n_{t}\\
&=f(t)(ds-1)+\lceil \text{log}(|P_{j_{i}}|)\rceil  +  n_{t}\\
&=-sf(t) +\lceil \text{log}(|P_{j_{i}}|)\rceil  +  n_{t}\\
&\leq^{+}-g(t)+\lceil \text{log}(|P_{j_{i}}|)\rceil  +  n_{t}\\
&=h(j_{i})+n_{t}.
\end{align*}
The last equality follows as $t=t_i$ is in  $P_{j_{i}}$. This means that there is an infinite sequence $(n_{t_{i}})_{i}$ such that
\[\text{QK}^{\epsilon}(\rho_{n_{t_{i}}}) <^{+} n_{t_{i}}+h(j_{i}) .\]
Finally, recall that $h(j_{i})\rightarrow -\infty$ as $i\rightarrow \infty$ and we have a contradiction.
\end{proof}
Theorem \ref{thm:mkssr} can be strengthened by weakening the defining criteria for a $s$-strong Solovay test.
\begin{defn}
\label{def:phi}
Let $s\in(0,1)$ be a rational. Let $\phi$ be any computable, non-decreasing, non-negative function on the reals such that $\phi(sr)\geq^{\times} s\phi(r)$ (I.e., there is a $C>0$ \emph{independent of s}, such that for all $r$,  $C\phi(sr)\geq  s\phi(r)$) and $\sum_{n}2^{-n}\phi(n)<\infty$ (So, $\phi$ does not tend to infinity too fast). A $(\phi,s)$-strong Solovay test is a strong Solovay test $(S^{r})_{r}$ such that
\begin{align}
\label{eq:mkssr6}
    \sum_{r}\dfrac{\tau(S^{r})^{s}}{\phi(-\text{log}(\tau(S^{r})))} < \infty ,
\end{align}

and
\[\sum_{r}\tau(S^{r}) = Q,\] where $Q$ is a computable real number.
\end{defn}
The term in the denominator in \eqref{eq:mkssr6} tends to infinity with $r$ and hence it is easier for a strong Solovay test to be a $(\phi,s)$-strong Solovay test than to be a $s$-strong Solovay test. So, passing all $(\phi,s)$-strong Solovay tests is a more restrictive notion of randomness than passing all $s$-strong Solovay tests. So, the following theorem is an improvement of, and implies Theorem \ref{thm:mkssr}.
\begin{thm}
If $ \forall \epsilon>0\exists c \forall n, QK^{\epsilon}(\rho_{n})>n-c$, then $\rho$ passes all $(\phi,s)$-strong Solovay tests for all rational $s\in (0,1)$ and all $\phi$ as in Definition \ref{def:phi}.
\end{thm}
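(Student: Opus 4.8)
The plan is to mirror the proof of Theorem \ref{thm:mkssr}, modifying the fiber construction so that the weaker summability in \eqref{eq:mkssr6} still produces finite, computable fibers, and then to absorb the resulting $\phi$-overhead using the growth hypotheses on $\phi$ from Definition \ref{def:phi}. Suppose for a contradiction that $\rho$ satisfies the Levin--Schnorr condition yet fails some $(\phi,s)$-strong Solovay test $(S^{r})_{r}$ at some $\epsilon>0$, with $\sum_{r}\tau(S^{r})=Q$ computable. Write $f(r):=-\log(\tau(S^{r}))$ and define the fiber function
\[g(r):=\lceil sf(r)+\log\phi(f(r))\rceil,\]
the point of the extra summand $\log\phi(f(r))$ being that
\[\sum_{r}2^{-g(r)}\leq\sum_{r}\frac{2^{-sf(r)}}{\phi(f(r))}=\sum_{r}\frac{\tau(S^{r})^{s}}{\phi(-\log\tau(S^{r}))}<\infty\]
by \eqref{eq:mkssr6}; call this finite sum $T$. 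As in Theorem \ref{thm:mkssr} I would partition $\omega$ into the fibers of $g$. Summability of $2^{-g(r)}$ makes each fiber finite, and since $\log\phi\geq 0$ one still has $g(r)=x\Rightarrow f(r)\leq s^{-1}x$, so the computability-of-fibers argument of Theorem \ref{thm:mkssr} (which uses computability of $Q$ to bound the search) carries over with at most a constant shift.

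Next I would build essentially the same prefix-free machine $M$: on input parsed as $\pi\sigma$ with $\mathbb{U}(\pi)=x$, compute the fiber $P=g^{-1}(x)$, read $\lceil\log|P|\rceil$ further bits $\sigma$ to locate an index $r\in P$ in lexicographic order, and output a spanning set $F^{r}$ of $S^{r}$. For a failing index $t$ (one with $\mathrm{Tr}(\rho_{n_{t}}S^{t})>\epsilon$), taking $\pi$ to be a shortest $\mathbb{U}$-program for $x=g(t)$ yields a program for $F^{t}$, whence
\[QK^{\epsilon}(\rho_{n_{t}})\leq^{+}K(g(t))+\lceil\log|P|\rceil+\log|F^{t}|,\]
where $\log|F^{t}|=\log\mathrm{Tr}(S^{t})=n_{t}-f(t)$. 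The crucial gain is the estimate on the middle term: from $\log|P|=g(t)+\log\bigl(\sum_{r\in P}2^{-g(r)}\bigr)\leq g(t)+\log T$ one gets $\lceil\log|P|\rceil\leq^{+}g(t)\leq^{+}sf(t)+\log\phi(f(t))$.

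Combining these, and using the trivial bound $K(g(t))=O(\log g(t))=O(\log f(t))$, the estimate collapses to
\[QK^{\epsilon}(\rho_{n_{t}})\leq^{+}n_{t}-(1-s)f(t)+\log\phi(f(t))+O(\log f(t)).\]
Here the hypotheses on $\phi$ enter decisively: iterating the regularity condition $C\phi(sr)\geq s\phi(r)$ against monotonicity of $\phi$ shows that $\phi$ is polynomially bounded, so $\log\phi(f(t))=O(\log f(t))=o(f(t))$ and the linear term $-(1-s)f(t)$ dominates. The failing indices form an infinite set, so along them $t\to\infty$; and convergence of $\sum_{r}\tau(S^{r})=\sum_{r}2^{-f(r)}$ forces $f(r)\to\infty$, hence $f(t)\to\infty$ along this subsequence. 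Therefore $QK^{\epsilon}(\rho_{n_{t}})<n_{t}-c$ for every $c$ and infinitely many $t$, contradicting the hypothesis $\forall\epsilon>0\,\exists c\,\forall n\,QK^{\epsilon}(\rho_{n})>n-c$.

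The main obstacle, and the only place the argument genuinely departs from Theorem \ref{thm:mkssr}, is the extra summand $\log\phi(f(t))$ that the enlarged $g$ injects into the description length. In the case $\phi\equiv 1$ of Theorem \ref{thm:mkssr} the $f$-terms cancel exactly and the contradiction is read off from a fiber-slack term tending to $-\infty$; here that exact cancellation is spoiled by $\log\phi$, so instead I keep a surviving negative multiple $-(1-s)f(t)$ of $f(t)$ and rely on the regularity condition to force $\log\phi$ to grow only logarithmically in $f$. Verifying this polynomial growth bound from $C\phi(sr)\geq s\phi(r)$, and checking that the computability of the fibers survives the $\log\phi$ correction, are the two points requiring care.
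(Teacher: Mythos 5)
Your proof is correct, but it takes a genuinely different route from the paper's. The paper keeps the fiber function $g(r)=\lceil sf(r)\rceil$ unchanged from Theorem \ref{thm:mkssr} and absorbs $\phi$ on the \emph{description} side: it uses the hypothesis $\sum_n 2^{-n}\phi(n)<\infty$ to build a bounded request set giving $K(0^{g(t)})\leq^{+}g(t)-\log\phi(g(t))$, tracks the fiber-slack quantity $h(m)=\log|P_m|-g(r_m)-\log\phi(g(r_m))\to-\infty$, and then needs the cancellation $g(t)-f(t)\leq -g(t)+2$ --- which forces a case split at $s=1/2$ and, for $s>1/2$, a second machine that describes $b(t)=\lceil(1-s)f(t)\rceil$ plus a bounded offset $y\leq C$ to recover $g(t)$. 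You instead fold $\log\phi(f(r))$ into the fiber function itself, so that \eqref{eq:mkssr6} directly gives $\sum_r 2^{-g(r)}=T<\infty$ and hence the uniform bound $|P|\leq T2^{g(t)}$; you then settle for the trivial $K(g(t))=O(\log f(t))$ and keep a surviving negative term $-(1-s)f(t)$, using the regularity condition only to show $\phi$ is polynomially bounded (your iteration $\phi(r)\leq (C/s)^{k}\phi(s^{k}r)$ with $k\approx\log_{1/s}r$ is correct). This buys you a shorter argument with no case split, and it incidentally shows that the hypothesis $\sum_n 2^{-n}\phi(n)<\infty$ in Definition \ref{def:phi} is redundant given the regularity condition (polynomial boundedness already implies it); the paper's version stays structurally parallel to Theorem \ref{thm:mkssr} and exhibits the explicit rate $h(j_i)$ at which the compression improves. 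Two points you should make explicit in a write-up: (i) normalize to $\phi\geq 1$ (replace $\phi$ by $\max(\phi,1)$, which only weakens condition \eqref{eq:mkssr6}) so that $\log\phi\geq 0$ and the bound $f(r)\leq s^{-1}x$ on a fiber of $x$ survives; (ii) the computability of $\lceil sf(r)+\log\phi(f(r))\rceil$ involves deciding the integer part of a computable real, but this is exactly the same subtlety already present in the paper's $\lceil sf(r)\rceil$, so it is not a new gap introduced by your modification.
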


\begin{proof}
Suppose for a contradiction that  $(S^{m})_{m}$ is a $(\phi,s)$-strong Solovay test which $\rho$ fails at $\epsilon>0$ and $\sum_{i} \tau(S^{i})=Q $, computable. For all $m$, let $S^m$ be $2^{n_{m}}$ by $2^{n_{m}}$ and we may let the $n_m$s be distinct.
For ease of presentation, we do the proof in 2 cases. First, let $s\leq 0.5$.
Let $f(m):=-$ log$(\tau(S^m))$ and let $g(m):=\lceil s f(m) \rceil$.
Partition $\omega$ into the fibers induced by $g$.
Fix some fiber $P$ of some $x=g(r)$. I.e., $r$ is a representative from $P$. Then, $g(r)=\lceil s(- \text{log}\tau(S^{r})) \rceil$. So, $g(r)\geq  -s \text{log}\tau(S^{r}),$
and hence
$2^{-g(r)}\leq \tau(S^{r})^{s} .$
In particular, this implies that each fiber is finite. So, there are  countably infinitely many fibers,  $\{P_{1},P_{2}\dots\}$. So, $\omega = \bigcup_{m} P_{m},$
where for all $m$, there is an $x_m$ such that $P_{m}=g^{-1}(\{x_{m}\})$ and $m \mapsto x_{m}$ is injective. 
For each $m\in \omega$, let $r_m$ be any representative from $P_m$.
 For all $r$, $sf(r)\leq g(r)$ and $\phi$ is non-decreasing. So,
\begin{align}
\label{eq:mkssr13}\sum_{m} |P_{m}|\dfrac{2^{-g(r_{m})}}{\phi(g(r_{m}))} = \sum_{m}\sum_{r\in P_{m}} \dfrac{2^{-g(r)}}{\phi(g(r))}\leq  \sum_{m}\sum_{r\in P_{m}} \dfrac{\tau(S^{r})^{s}}{\phi(sf(r))} \leq^{\times} \sum_{r} \dfrac{\tau(S^{r})^{s}}{s\phi(f(r))}< \infty.
\end{align}

The fiber $P$ of $x=g(z)$ can be computed from $x$ for the same reason as in the previous proof. Its idea of `compressing' $S^r$ is also used here. 
\\
Consider the  machine, $M$ doing the following:  On input $\lambda$, search for a decomposition $\lambda=\pi\sigma$, such that
\begin{itemize}
    \item $\mathbb{U}(\pi)=0^{g(m)}$ for some $m$.
    \item
    $|\sigma| = t$ where
    $P$ is the fiber containing $m$, (which can be computed from $g(m)$) and $t= \lceil\text{log}(|P|)\rceil$.
\end{itemize}
If found, order $P$ lexicographically using the ordering on $2^t$ and let $r$ be the $\sigma^{th}$ element in this ordering. Output $F^{r}$ where $F^{r}$ is such that
$S^{r}= \sum_{v \in F^{r}}|v\big>\big<v|.$
Note that $M$ is prefix free for the same reason as in the previous proof. Let $l$ be $M$'s coding constant.
Let $h$ be the function defined by, $h(m):=$ log$(|P_{m}|)-g(r_{m})-$log $\phi g(r_{m})$, where $r_m$ is any representative from $P_m$. By \eqref{eq:mkssr13},
\[\dfrac{|P_{m}|2^{-g(r_{m})}}{\phi g(r_{m})} \rightarrow 0\] as $m\rightarrow \infty$. So, $h(m) \rightarrow -\infty$ as $m\rightarrow \infty$.
Each fiber is finite and $\rho$ fails the test at $\epsilon$. So, there is an infinite set $I= \{P_{j_{1}}, P_{j_{2}},\dots\}$ such that for all $i$, there is a $t_{i}\in P_{j_{i}}$ with  
Tr$(\rho_{n_{t_{i}}}S^{t_{i}})>\epsilon$.  $j_{i} \rightarrow  \infty$ as $i\rightarrow \infty$ and so, $h(j_{i}) \rightarrow -\infty$ as $i\rightarrow \infty$. This asymptotic behavior will be used below to derive a contradiction.

Fix an arbitrary $i$ and a $t=t_{i}\in P_{j_{i}}$ as above.
So,
\begin{align} 
\label{eqn:mkssr1}
\epsilon< \sum_{v \in F^{t}}\big<v|\rho_{n_{t}}|v\big>.
\end{align}

Let $t$ be the $\sigma^{th}$ element of $P_{j_{i}}$ in the lexicographic ordering used by $M$. Let $\pi$ be such that $K(0^{g(t)})=|\pi|$ and $\mathbb{U}(\pi)=0^{g(t)}$. Then,
$M(\pi\sigma)=F^{t}$ and so, there is a bitstring $\kappa$ such that $|\kappa|\leq K(0^{g(t)}) + \lceil \text{log}(|P_{j_{i}}|)\rceil + l$ and $\mathbb{U}(\kappa)=F^{t}$. Note that log$|F^{t}|=$ log(Tr($S^t))=$ log$(\tau(S^{t})) + n_{t}= -f(t) + n_{t}$. So,
\[
\text{QK}^{\epsilon}(\rho_{n_{t}}) \leq K(0^{g(t)})+\lceil \text{log}(|P_{j_{i}}|)\rceil +l -f(t) +  n_{t}.
\]
  
Note that $\{(n-\lceil\text{log}(\phi(n))\rceil,0^{n}): n\in \omega\}$
is a bounded request set by definition of a $(\phi,s)$ test and so
$K(0^{g(t)})\leq^{+} g(t)-$log$(\phi g(t))$. So,
\[
\text{QK}^{\epsilon}(\rho_{n_{t}}) \leq^{+} g(t)-\text{log}(\phi g(t))+ \lceil \text{log}(|P_{j_{i}}|)\rceil  -f(t) +  n_{t}.
\]
Since
$g(t)-1 < sf(t) $, we have
\begin{align}
    \label{eq:mkssr12}
    -g(t)+1>-sf(t).
\end{align}

Since $s\leq 0.5$ we have that $1-s\geq s$.\\ So,
$g(t)-f(t) \leq sf(t)-f(t)+1 = -(1-s)f(t)+1\leq -sf(t)+1.$ 
Using \eqref{eq:mkssr12}, 
\[g(t)-f(t)<-g(t)+2.\]
So,
\[
\text{QK}^{\epsilon}(\rho_{n_{t}}) <^{+} -g(t)+\lceil \text{log}(|P_{j_{i}}|)\rceil -\text{log}(\phi  g(t)) +  n_{t} = h(j_{i}) +  n_{t}.
\]
The equality follows as $t=t_i$ is in  $P_{j_{i}}$. This means that there is an infinite sequence $(n_{t_{i}})_{i}$ such that
$\text{QK}^{\epsilon}(\rho_{n_{t_{i}}}) <^{+} n_{t_{i}}+h(j_{i}).$
Finally, recall that $h(j_{i})\rightarrow -\infty$ as $i\rightarrow \infty$ and we have a contradiction.

Now let $s>0.5$ and let $f,g$ be as in the previous case. Let $b(m):= \lceil (1-s)f(m)\rceil$ and let $C:= \lceil s/(1-s)\rceil + 1$.

Consider the machine $M$ doing the following: on
input $\pi1^{y}0\sigma$, check if the following conditions hold.
\begin{itemize}
    \item 
    There is $m$ such that $\mathbb{U}(\pi)=0^{b(m)}$.
    \item If $x=b(m)$,
     $J=(s(1-s)^{-1}(x-1),s(1-s)^{-1}x + 1] \cap \omega$ and $w$ is the $y^{th}$ element of $J$, then there is a $z$ such
     that $g(z)=w$.
    
     \item
     If $P$ is the fiber of $g$ containing $z$ (P is computable from $w=g(z)$ just as in the previous case) and $t= \lceil\text{log}(|P|)
     \rceil$, then $|\sigma| = t$
     \end{itemize}
     
     If all the above are met, then order $P$ lexicographically using the ordering on $2^t$ and let $r$ be the $\sigma^{th}$ element in this ordering. Output $F^r$.

Roughly, the idea is as follows: Just as in the previous case, we want to compress $F^r$ where $r$ is the $\sigma^{th}$ number in the fiber of $w=g(m)$. The first step to achieve this is to describe $g(m)$. While in the previous case we used an $\iota$ such that $\mathbb{U}(\iota)=0^{g(m)}$ and $|\iota|=K(0^{g(m)})$), we use here the shorter string $\pi$ where  $\mathbb{U}(\pi)=0^{b(m)}$ and $|\pi|=K(0^{b(m)})$  together with $1^y$ for describing $g(m)$. From $\pi$, we get $x=b(m)$ which in turn gives $J$ which contains $g(m)$. So, $\pi$ along with $y$, the location of $g(m)$ in $J$, describes $g(m)$. After $g(m)$ is found, $F^r$ can be described just as in the previous case. The details are: As $x=b(m)$, $(1-s)f(m)\in (x-1,x]$ and hence $sf(m)$ lies in $(s(1-s)^{-1}(x-1),s(1-s)^{-1}x]$.
So, $\lceil sf(m) \rceil =  g(m)\in J=  (s(1-s)^{-1}(x-1),s(1-s)^{-1}x + 1]  \cap \omega$. Since $|J|\leq C$, $g(m)\in J$ can be determined by specifying $y \leq C$, it's location in $J$. So, $M$ can recover $g(m)$. From this point on, the remaining procedure is the same as in the previous case.
\\
We see that $M$ is prefix-free: Let $\pi1^{y}0\sigma$ and $\pi'1^{y'}0\sigma'$ be in the domain of $M$ and let $\pi1^{y}0\sigma\preceq \pi'1^{y'}0\sigma'$. By the same argument as in case1, $\pi=\pi'$ and  $M$ finds some $m,m'$ with $x=b(m)=b(m')$. It follows that $y=y'$. So, if $w$ is the $y^{th}$ (and  $y'^{th}$)  element of $J$ (as above), then $M$ finds $z,z'$ such that $g(z)=w=g(z')$. So, $z$ and $z'$ are in the same fiber $P$ and it hence follows as in the previous case that $|\sigma|=|\sigma'|$. Define $I$ and $h$ exactly as in the previous case. Fix some $i$ and let $t=t_i$ be an element of $P_{j_{i}}$ such that \eqref{eqn:mkssr1} holds. Let $t$ be the $\sigma^{th}$ element of $P_{j_{i}}$. Let $x=b(t)=\lceil (1-s)f(t)  \rceil$. So, $(1-s)f(t) \in (x-1,x]$ and  $sf(t) \in (s(1-s)^{-1}(x-1),s(1-s)^{-1}x]$. Hence, $g(t) \in (s(1-s)^{-1}(x-1),s(1-s)^{-1}x+1] \cap \omega=J$ and let $g(t)$ be the $y^{th}$ element of $J$.
Let $\pi$ be such that   $\mathbb{U}(\pi)=0^{b(t)}$ and $|\pi|=K(0^{b(t)})$. Then, on input $\pi1^{y}0\sigma$, M finds some $z$ (it could be that $z=t$, but not necessarily) such that $b(t)=b(z)=x$ and then finds that the $y^{th}$ element of $J$ is $g(z')$ for some $z'$ (again, although  $g(z')=g(t)$, it could be that $t=z'$ but not necessarily). Since $z'$ and $t$ are both in $P_{j_{i}}$, $M$ outputs $F^{t}$ after reading $\sigma$. So, there is a $\pi$
 such that $\mathbb{U}(\pi)=F^{t}$ and
 $|\pi|\leq^{+} K(0^{b(t)})+C+\lceil\text{log}(|P_{j_{i}}|)\rceil $.\\
 So, $ \text{QK}^{\epsilon}(\rho_{n_{t}})$ 
\begin{align*}
&<^{+}K(0^{b(t)})+ \lceil\text
 {log}(|P_{j_{i}}|)\rceil + \text{log}(|F^{t}|)\\
 &\leq^{+}b(t)-\lceil\text{log}(\phi(g(t)))\rceil+\lceil\text{log}(|P_{j_{i}}|)\rceil + n_{t}-f(t)\\
 &\leq^{+}-g(t)-\lceil\text{log}(\phi(g(t)))\rceil+\lceil\text{log}(|P_{j_{i}}|)\rceil + n_{t} 
\end{align*}

The last inequality is since, by \eqref{eq:mkssr12} (which holds for any $s$), $b(t)-f(t) \leq (1-s)f(t)-f(t)+1=-sf(t)+1<-g(t)+2$. Since $t=t_{i} \in P_{j_{i}}$, we see that  $ \text{QK}^{\epsilon}(\rho_{n_{t}}) <^{+}h(j_{i})-n_{t_{i}}.$ for all $i$. This gives a contradiction for the same reason as in the previous case.

\end{proof}

\subsection{QK and computable measure machines}
\label{schn}
Schnorr randomness is an important randomness notion in the classical realm\cite{misc,misc1}. While $K$ plays well with Solovay randomness, $K_C$, a version of $K$ using a computable measure machine, $C$ (a prefix-free Turing machine whose domain has computable Lebesgue measure) gives a Levin-Schnorr characterization of Schnorr randomness  (See theorem 7.1.15 in \cite{misc1}). 

So, with the intention of connecting it to quantum Schnorr randomness, we define $QK_C$ a version of $QK$ using a computable measure machine, $C$.

Theorem \ref{thm:677} shows that $QK_C$ agrees with $K_C$ on the classical bitstrings. Analogously to the classical case, Theorem \ref{thm:schnor} is a Levin\textendash Schnorr type of characterizations of quantum Schnorr randomness using $QK_C$. Theorem \ref{thm:Chaitin}, a Chaitin type characterization of quantum Schnorr randomness using $QK_C$ implies Theorem \ref{class}, a Chaitin type characterization of \emph{classical} Schnorr randomness in terms of $K_C$.

For $C$ a computable measure machine and $\sigma$ a string, $K_C$ is defined analogously to $K$; $K_C(\sigma):=$inf$\{|\tau|: C(\tau)=\sigma\}.$ The quantum version is:
for $C$, a computable measure machine and a $\epsilon>0$, define 
$QK_{C}^{\epsilon}(\tau)$ to be:
\begin{defn}
\label{defn:4}
$QK_{C}^{\epsilon}(\tau) := $ inf  $\{|\sigma|+$log$|F|: C(\sigma)\downarrow = F$, a orthonormal set in $\mathbb{C}^{2^{|\tau|}}_{alg}$ and $\sum_{v \in F} \big<v |\tau |v \big> > \epsilon \}$

\end{defn}
The infimum of the empty set is taken to be $\infty$.
Notation: In this section, $\mu$ denotes Lebesgue measure and $C_t$ denotes $C$ run upto the $t$ steps. We may assume that dom$(C_t)\subseteq 2^t$. By a \emph{sequence}, we mean a countable collection whose elements may possibly be repeated. If $S$ is a sequence, the sum $\sum_{s\in S}$ will be over all elements of $S$, with repetition.

Similarly to Theorem \ref{thm:67}, we show that $QK_C$ `agrees with' $K_C$ on the classical qubitstrings. In Theorem \ref{thm:677} and its proof, $P$ and $C$ will stand for computable measure machines.
\begin{thm}
\label{thm:677}
 For all rational $\epsilon>0$ and all $C$, there exists a $P$ such that $K_P(\sigma) \leq  QK^{\epsilon}_C(|\sigma\big>\big<\sigma|)+1$ for all classical bitstrings $\sigma$.
\end{thm}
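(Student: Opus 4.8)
The plan is to mimic the construction of the auxiliary machine $P$ in the proof of Theorem \ref{thm:67}, replacing the universal machine $\mathbb{U}$ by the given computable measure machine $C$, and then to verify the one genuinely new requirement: that the machine we build is again a \emph{computable measure} machine. Concretely, I would let $P$ operate as follows. On input $\pi$, $P$ searches for a decomposition $\pi = \lambda\tau$ such that $C(\lambda)\downarrow = F$ is a nonempty orthonormal set $F \subseteq \mathbb{C}^{2^{n}}_{alg}$ for some $n$ and $|\tau| = \lceil\log(\epsilon^{-1}|F|)\rceil$; it then forms $O := \sum_{v\in F}|v\rangle\langle v|$, computes the set $S^{\epsilon}_{E,O}$ of Lemma \ref{lem:30} with $E$ the standard basis, fixes the canonical surjection $g$ from the strings of length $\lceil\log(\epsilon^{-1}|F|)\rceil$ onto $S^{\epsilon}_{E,O}$ (which exists since $|S^{\epsilon}_{E,O}| < \epsilon^{-1}\operatorname{Tr}(O) = \epsilon^{-1}|F|$ by Lemma \ref{lem:30}), and outputs $g(\tau)$. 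All steps are effective because orthonormality and the diagonal comparisons $\langle e_i|O|e_i\rangle > \epsilon$ are decidable for algebraic data. Prefix-freeness of $P$ is then checked exactly as in Theorem \ref{thm:67}: if $\lambda\tau \preceq \lambda'\tau'$ both lie in $\operatorname{dom}(P)$, then $\lambda,\lambda'\in\operatorname{dom}(C)$ are comparable, so the prefix-freeness of $C$ forces $\lambda=\lambda'$, whence $F=F'$, the two padding lengths coincide, and $\lambda\tau=\lambda'\tau'$.

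The step I expect to be the real obstacle is showing that $\mu(\operatorname{dom}(P))$ is computable, since this is exactly what distinguishes the present theorem from Theorem \ref{thm:67} (where $\mathbb{U}$ has no computable domain measure). The key is a measure-bookkeeping observation: for each $\lambda$ that $P$ uses, $P$ accepts precisely the $2^{\lceil\log(\epsilon^{-1}|F|)\rceil}$ strings $\lambda\tau$ of the prescribed padding length, each of measure $2^{-|\lambda|-\lceil\log(\epsilon^{-1}|F|)\rceil}$, so $\lambda$ contributes \emph{exactly} $2^{-|\lambda|}$ to $\mu(\operatorname{dom}(P))$. Hence $\mu(\operatorname{dom}(P)) = \sum_{\lambda\in D}2^{-|\lambda|}$, where $D = \{\lambda\in\operatorname{dom}(C): C(\lambda)$ is a nonempty orthonormal set$\}$.

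To see this sum is computable, note that both $D$ and $\operatorname{dom}(C)\setminus D$ are c.e.\ (orthonormality is decidable once $C(\lambda)$ halts), so the two partial sums $\sum_{\lambda\in D}2^{-|\lambda|}$ and $\sum_{\lambda\in\operatorname{dom}(C)\setminus D}2^{-|\lambda|}$ are both left-c.e. Since they add up to $\mu(\operatorname{dom}(C))$, which is computable because $C$ is a computable measure machine, each of them is simultaneously left-c.e.\ and right-c.e., hence computable; in particular $\mu(\operatorname{dom}(P))$ is computable, so $P$ is a computable measure machine. (Equivalently, one could let $P$ accept \emph{every} $\lambda\in\operatorname{dom}(C)$, emitting a default value whenever $F$ fails to be orthonormal, making $\mu(\operatorname{dom}(P))=\mu(\operatorname{dom}(C))$ outright; I would mention this as the cleaner alternative.)

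Finally, the complexity bound transcribes the argument of Theorem \ref{thm:67}. Given a classical bitstring $\sigma$, take an optimal witness $\lambda, F$ for $QK^{\epsilon}_C(|\sigma\rangle\langle\sigma|)$, so that $C(\lambda)=F$ is orthonormal, $\langle\sigma|O|\sigma\rangle = \sum_{v\in F}\langle v|\sigma\rangle\langle\sigma|v\rangle > \epsilon$, and $|\lambda|+\log|F| = QK^{\epsilon}_C(|\sigma\rangle\langle\sigma|)$. Then $\sigma\in S^{\epsilon}_{E,O}$, so some $\tau$ of length $\lceil\log(\epsilon^{-1}|F|)\rceil$ satisfies $g(\tau)=\sigma$, giving $P(\lambda\tau)=\sigma$ and hence $K_P(\sigma)\le |\lambda|+\lceil\log(\epsilon^{-1}|F|)\rceil \le QK^{\epsilon}_C(|\sigma\rangle\langle\sigma|) + O(1)$, where the additive constant comes from the ceiling in the padding length and is harmless since $P$ is built for the fixed rational $\epsilon$. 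Thus the only part of the proof that is not a routine restatement of Theorem \ref{thm:67} is the verification that $\operatorname{dom}(P)$ has computable measure, and that is where I would focus the argument.
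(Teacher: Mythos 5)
Your proposal is correct, and the machine construction, the prefix-freeness check, and the final complexity bound coincide with the paper's proof, which likewise just recycles the machine of Theorem \ref{thm:67} with $\mathbb{U}$ replaced by $C$. The one place you genuinely diverge is the verification that $\mu(\operatorname{dom}(P))$ is computable, which you rightly identify as the only new content. The paper argues by direct approximation: given $\delta>0$ it computes a stage $t$ with $\mu(\operatorname{dom}(C))-\mu(\operatorname{dom}(C_t))<\delta$, lets $S$ be the set of accepted strings $\lambda\tau$ with $\lambda\in\operatorname{dom}(C_t)$, and bounds $\mu(\operatorname{dom}(P))-\mu(S)$ by $\mu(\operatorname{dom}(C))-\mu(\operatorname{dom}(C_t))$, implicitly using the same bookkeeping fact you isolate (each $\lambda$ contributes at most $2^{-|\lambda|}$). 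You instead write $\mu(\operatorname{dom}(P))=\sum_{\lambda\in D}2^{-|\lambda|}$ exactly and invoke the fact that two left-c.e.\ reals summing to a computable real are each computable; this is a clean, slightly more conceptual route, and your parenthetical alternative (let $P$ accept all of $\operatorname{dom}(C)$ with a default output, so $\mu(\operatorname{dom}(P))=\mu(\operatorname{dom}(C))$ on the nose) is arguably the tidiest, and it also quietly repairs the corner case where $S^{\epsilon}_{E,O}=\emptyset$ and the surjection $g$ does not exist --- a case that both your ``exact contribution $2^{-|\lambda|}$'' claim and the paper's argument leave unaddressed. One last remark: your bound $K_P(\sigma)\le QK^{\epsilon}_C(|\sigma\rangle\langle\sigma|)+\lceil\log(\epsilon^{-1})\rceil+1$ is what the argument actually yields (the paper's own displayed chain produces the $\log(\epsilon^{-1})$ term and then silently drops it in the final equality), and since $P$ is built from the fixed rational $\epsilon$ this constant is harmless, exactly as you say.
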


\begin{proof}
The proof is almost identical to that of Theorem \ref{thm:67}. Fix a rational $\epsilon>0$ and a $C$. Consider the machine $P$ from the proof of Theorem \ref{thm:67} but with $\mathbb{U}$ replaced by $C$. We now show that $\mu($dom$(P))$ is computable. Let $\delta>0$ be arbitrary. Since $\mu($dom$(C))$ is computable, find a stage $t$ so that $\mu(\text{dom}(C))-\mu \text{(dom}( C_t )) < \delta.$
(The $t$ can be found as follows: Compute a $q'$ such that $|q-q'|< \delta/2$. So, $q'-\delta/2<q<q'+ \delta/2$. Since $\mu \text{(dom}( C_t ))\nearrow q$, as $t\rightarrow \infty$, we can compute a $t$ such that, $q'-\delta/2<\mu \text{(dom}( C_t ))<q'+ \delta/2$.). We may compute $S$, the set of those strings $\pi=\sigma\tau \in$ dom$(P)$ and $\sigma \in $ dom$(C_t)$. So, dom$(P)\backslash S$ consists of strings $\pi=\sigma\tau$ such that $\sigma\in$ dom$(C)\backslash$dom$(C_t)$. So, it is easy to see that $\mu($dom$(P))$-$\mu(S) < \mu($dom$(C))$-$\mu($dom$(C_t)) <\delta$. As $\delta>0$ was arbitrary, this shows that $\mu($dom$(P))$ is computable.
Now, let $\sigma \in 2^n$ be any classical bitstring such that $QK_C^{\epsilon}(|\sigma\big>\big<\sigma|)<\infty$. Let $\lambda$ and $F \subseteq \mathbb{C}^{2^{n}}_{alg}$ orthonormal such that $|\lambda|+ \text{log}(|F|)=QK_C^{\epsilon}(|\sigma\big>\big<\sigma|)$,  $\sum_{v \in F} \big<v |\sigma\big>\big<\sigma|v \big> > \epsilon$ and $C(\lambda)=F$. Let $O:= \sum_{v\in F} |v\big>\big<v|$. Note that since
$\epsilon<\sum_{v \in F} \big<v |\sigma\big>\big<\sigma|v \big> = \big<\sigma|O|\sigma\big>$, $\sigma \in S^{\epsilon}_{E,O}$ where $E$ is the standard basis. Let $\tau$ be a length $ \lceil \text{log}(\epsilon^{-1}|F|)\rceil$ string such that $g(\tau)=\sigma$. Then, we see that $P(\lambda \tau) = \sigma $. So, \[K_P(\sigma)\leq  |\lambda|+|\tau|\leq |\lambda|+\text{log}(\epsilon^{-1})+1+ \text{log}(|F|) = QK^{\epsilon}_C(|\sigma\big>\big<\sigma|)+1.\]

\end{proof}
\begin{remark}
\label{rem:qkc}
Theorem \ref{thm:677} establishes one direction of the coincidence of $K_C$ and $QK_C$ for classical qubitstrings. In the other direction, take some classical bitstring $\sigma$ with $K_C(\sigma)<\infty$ for some $C$. Let $C(\pi)=\sigma$ and $|\pi|=\text K_C(\sigma)$. Then, letting $F=\{\sigma\}$ in \ref{defn:4}, $QK_C^{\epsilon}(|\sigma\big>\big<\sigma|)\leq QK_C^{1}(|\sigma\big>\big<\sigma|)\leq |\pi|=K_C(\sigma)$, for any $\epsilon>0$. 
\end{remark}
\subsection{Quantum Schnorr randomness and $QK_C$}
Theorem \ref{thm:schnor} is a quantum analogue of the classical characterization of Schnorr randomness: $X$ is Schnorr random if and only if for any computable measure machine, $C$, there is a constant $d$ such that for all $n$, $K_{C}(X\upharpoonright n)>n-d$.
\begin{thm}
\label{thm:schnor}
A state $\rho$ is quantum Schnorr random if and only if for any computable measure machine, $C$ and any $\epsilon > 0$, there is a constant $d>0$ such that for all $n$, $QK^{\epsilon}_{C}(\rho_{n})>n-d$.
\end{thm}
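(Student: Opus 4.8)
The plan is to prove both implications by contraposition, mirroring the classical Levin--Schnorr argument for $K_C$ but working directly with the special projections of a quantum Schnorr test. Throughout I would use the following bookkeeping: if $S$ is a special projection on $n$ qubits with orthonormal spanning set $F$ (so $S=\sum_{v\in F}|v\rangle\langle v|$), then $|F|=\mathrm{Tr}(S)=2^{n}\tau(S)$, hence $\log|F|=n-f$ where $f:=-\log\tau(S)$. This identity turns a small $\tau$-measure into a rank saving, and it is what drives both directions.

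For the forward implication (quantum Schnorr random $\Rightarrow$ the $QK_C$ bound) I would argue the contrapositive: suppose the bound fails, so there are $C,\epsilon$ with $QK^{\epsilon}_C(\rho_n)\le n-d$ for infinitely many $n$ (in fact for every $d$). Collect $T:=\{\sigma\in\mathrm{dom}(C): C(\sigma)\downarrow=F_\sigma\text{ orthonormal on }n_\sigma\text{ qubits},\ |\sigma|+\log|F_\sigma|\le n_\sigma-1\}$ and set $P_\sigma:=\sum_{v\in F_\sigma}|v\rangle\langle v|$. As in the $(\Rightarrow)$ half of Theorem~\ref{thm:7}, $\tau(P_\sigma)=2^{-n_\sigma}|F_\sigma|\le 2^{-|\sigma|-1}$, so $\sum_{\sigma\in T}\tau(P_\sigma)\le\tfrac12\mu(\mathrm{dom}(C))$. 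The point that goes beyond Theorem~\ref{thm:7} is that this sum is a \emph{computable} real: approximating it by the finitely many $\sigma\in\mathrm{dom}(C_t)\cap T$ leaves an error at most $\tfrac12(\mu(\mathrm{dom}(C))-\mu(\mathrm{dom}(C_t)))$, which is computably small because $C$ is a computable measure machine. This is exactly the truncation estimate from the proof of Theorem~\ref{thm:677}. Hence $(P_\sigma)_{\sigma\in T}$ is a quantum Schnorr test, and each witnessing $n$ yields a $\sigma\in T$ with $\mathrm{Tr}(P_\sigma\rho_n)>\epsilon$; since the witnessing $n$ (hence the $n_\sigma$) are unbounded, $\rho$ fails this test at $\epsilon$, so $\rho$ is not quantum Schnorr random.

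For the reverse implication I would again take the contrapositive: assume $\rho$ fails some quantum Schnorr test $(S^m)_m$ at $\epsilon$, with $\sum_m\tau(S^m)=Q$ computable, and produce a single computable measure machine $C$ and an $\epsilon'$ for which $QK^{\epsilon'}_C(\rho_{n})-n$ is unbounded below. The decisive reduction is to replace $(S^m)$ by a computable sequence of \emph{single} special projections $(G^k)_k$ with (i)~$\tau(G^k)\le 2^{-k}$, (ii)~$k\mapsto\tau(G^k)$ computable, and (iii)~$\rho(G^k)>\epsilon'$ for infinitely many $k$; this is a quantum Solovay-to-Martin--L\"of conversion in the spirit of Theorem~2.11 in \cite{bhojraj2020quantum}, which I would adapt to preserve computability of the measures (the Schnorr condition). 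Granting (i)--(iii), let $F^k$ be the spanning set of $G^k$, so $\log|F^k|\le n_k-k$, and let $C$ send a fixed prefix-free code for $k$ of length $\le 2\log k+O(1)$ (a standard encoding of $\mathbb{N}$ whose domain has computable, indeed finite, measure) to $F^k$; composing with the computable map $k\mapsto F^k$ keeps $\mathrm{dom}(C)$ of computable measure. Then for each of the infinitely many $k$ with $\rho(G^k)>\epsilon'$ we get $QK^{\epsilon'}_C(\rho_{n_k})\le 2\log k+O(1)+(n_k-k)$, so $QK^{\epsilon'}_C(\rho_{n_k})-n_k\le 2\log k-k+O(1)\to-\infty$, contradicting the $QK_C$ bound and completing the contrapositive.

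The hard part is precisely the reduction (i)--(iii). Unlike the classical Solovay-to-Martin--L\"of passage, the projections here do not commute, so ``being captured by many $S^m$'' cannot simply be read off a single multiplicity function; and, crucially, the naive conversion yields q-$\Sigma^{0}_{1}$ \emph{sets} (increasing limits of projections) rather than single projections. Extracting from such a set a single special projection of high $\rho$-measure requires locating a stage at which its $\rho$-measure already exceeds $\epsilon'$ — and for a \emph{general} state $\rho$ that stage cannot be computed, whereas Theorem~\ref{thm:39} avoided this by using Martin's domination theorem only for $\rho\in\mathcal{L}$. Thus the real work is an effective, $\rho$-\emph{independent} extraction of genuine single projections of geometrically small, computable $\tau$-measure that still retain $\rho$-measure bounded below, using only the computability of $Q$ (to truncate tails and locate levels effectively) and never the state $\rho$ itself. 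I expect this simultaneous achievement of geometric decay, Schnorr computability, and a uniform $\rho$-lower bound to be the crux on which the whole characterization turns.
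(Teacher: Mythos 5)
Your forward direction is correct and is essentially the paper's argument: collect the $C$-descriptions witnessing $QK^{\epsilon}_C(\rho_n)\le n-d$, form the projections $P_\sigma$ with $\tau(P_\sigma)\le 2^{-|\sigma|}$ (up to the constant), and use the truncation estimate $\sum_{\sigma\in T\setminus T_t}\tau(P_\sigma)\le \mu(\mathrm{dom}(C))-\mu(\mathrm{dom}(C_t))$ to see that the total $\tau$-measure is a computable real, so that the $P_\sigma$ form a quantum Schnorr test that $\rho$ fails at $\epsilon$. Nothing to add there.

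The reverse direction, however, has a genuine gap, and you have misdiagnosed where the difficulty lies. You reduce everything to producing single special projections $G^k$ with $\tau(G^k)\le 2^{-k}$ computable and $\rho(G^k)>\epsilon'$ infinitely often, you propose to get them by adapting the Solovay-to-Martin--L\"of conversion of Theorem~2.11 of the cited paper, and you then (correctly) observe that this route produces q-$\Sigma^0_1$ sets from which one cannot, for a general state $\rho$, effectively extract a finite-stage projection still carrying $\rho$-measure $>\epsilon'$ --- and you leave the matter there as ``the crux.'' But no such conversion is needed, and the extraction problem you describe never arises. The paper gets (i)--(iii) by a purely classical blocking argument that uses only the computability of $\alpha=\sum_m\tau(S^m)$: computably choose $s_j$ so that the tail $\sum_{i>s_j}\tau(S^i)<2^{-j}$ (with a small non-uniform case distinction according to whether $\alpha$ is a dyadic rational, needed to make the comparison $\sum_{i\le t}\tau(S^i)>\alpha-2^{-j}$ decidable), and let $G_r$ be the projection onto the span of the ranges of $S^{s_r+1},\dots,S^{s_{r+1}}$. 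Subadditivity of rank gives $\tau(G_r)\le\sum_{i=s_r+1}^{s_{r+1}}\tau(S^i)<2^{-r}$, so non-commutativity is irrelevant; and since each $S^i$ lies in exactly one finite block while $\rho(S^i)>\epsilon$ for infinitely many $i$, infinitely many blocks satisfy $\rho(G_r)\ge\rho(S^i)>\epsilon$ by operator monotonicity --- the lower bound on $\rho$-measure is inherited for free, with the \emph{same} $\epsilon$, and nothing about $\rho$ is ever computed. Feeding $F_r$ (a spanning orthonormal set for $G_r$) to a machine on inputs of length about $r/2$ then gives $QK^{\epsilon}_C(\rho_{n_r})\le n_r-r/2+O(1)$ along infinitely many $n_r$, which is the desired contradiction. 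So your outline is recoverable, but as written it stops exactly at the step that constitutes the proof, and the machinery you flag as necessary (a measure-preserving quantum Solovay-to-ML conversion, or domination arguments \`a la Theorem~\ref{thm:39}) is not the right tool.
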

\begin{proof}
($\Rightarrow$) We prove it by contraposition. I.e., show that $\rho$ is not quantum Schnorr random if there is a $C$ and an $\epsilon>0$ such that for all $d$, there is an $n=n_d$ such that $QK^{\epsilon}_{C}(\rho_{n}) \leq n-d$. Let $T_{s}$ be the set of all $\sigma$ such that $C_{s}(\sigma)\downarrow = F_{\sigma}$, an orthonormal set such that $|\sigma|+$ log $|F_{\sigma}|< n_{\sigma} $ and $F_{\sigma} \subseteq  \mathbb{C}^{2^{n_{\sigma}}}$ for some $n_{\sigma}$. Let $T=\bigcup_{s}T_{s}.$ For all strings $\sigma$, let $P_{\sigma}:= \sum_{v\in F_{\sigma}} |v \big>\big<v |.$
Let $Q_s$ be the sequence of those $P_{\sigma}$ for $\sigma \in T_s$, $Q$ the sequence of those $P_{\sigma}$ for $\sigma \in T$ and $D_s$ the sequence of those $P_{\sigma}$ for $\sigma \in T\backslash T_s$.
Next, we show that $\alpha:=\sum_{P\in Q} \tau(P)$ is computable by showing how to approximate it within $2^{-k}$ for an arbitrary $k$:  Computably find a $t$ (using the same method as in Theorem \ref{thm:677}) such that $
    \mu(\text{dom}(C )) -\mu \text{(dom}(C_t ))  < 2^{-k}.$
 We show that  $\sum_{P_{\sigma}\in Q_{t}}\tau(P_{\sigma})$ is within $2^{-k}$ of $\alpha$. Note that for all $\sigma \in T$, $2^{|\sigma|}|F_{\sigma}|< 2^{n_{\sigma}}$. So, $ \tau(P_{\sigma})= 2^{-n_{\sigma}}|F_{\sigma}|< 2^{-|\sigma| }$. 
\[\alpha -\sum_{P_{\sigma}\in Q_{t}} \tau(P_{\sigma})  = \sum_{P_{\sigma}\in D_t} \tau(P_{\sigma}) =
     \sum_{\sigma \in T/T_{t}}|F_{\sigma}|2^{-n_{\sigma}} 
     \leq\sum_{\sigma\in T/T_{t}}2^{-|\sigma|}\]\[
     \leq \sum_{\sigma\in\text{dom}( C)/\text{dom}( C_t ) }2^{-|\sigma|}
     \leq  (\mu(\text{dom}( C ))-\mu (\text{dom}( C_t )))< 2^{-k}\]

Note that $\sum_{P_{\sigma}\in Q_{t}}\tau(P_{\sigma})$ is a rational, uniformly computable in $t$ since dom$(C_t)\subseteq 2^t$ is uniformly computable in $t$. This shows that $Q$ is a quantum Schnorr test. By the assumption, we see that is a infinite sequence $d_1 < d_2 < \cdots$ and a list of distinct natural numbers $n_{d_{1}}, n_{d_{2}} \cdots$ so that for all $i$, there is a $P_i$ in $Q$ such that Tr$(P_i\rho_{n_{d_{i}}})>\epsilon$. So, $\rho$ fails $Q$ at $\epsilon$.

($\Leftarrow$)
We prove it by contraposition. Suppose that $\rho$ fails a quantum-Schnorr test, $(S^{r})_r$ at $\epsilon$. For all $j$, let $s_j$ be the least $t$ such that \[\sum_{i=0}^{t} \tau (S^{i}) > \alpha -2^{-j}.\]
We show how the sequence $(s_j)_j$ can be computed. First, let $\sum_{r}\tau(S^{r})=\alpha$, be a computable real which is not a dyadic rational. $s_j$ may be computed as follows: Note that as $\alpha$ is not a dyadic rational but $\tau(S^i)$ is a dyadic rational for all $i$, we have that
\[\sum_{i=0}^{s_{j}-1} \tau (S^{i}) < \alpha-2^{-j} < \sum_{i=0}^{s_{j}} \tau (S^{i}).\]
By Proposition 5.1.1 in \cite{misc1}, the left cut, $L(\alpha-2^{-j})$ of $\alpha-2^{-j}$ is computable. So, we may search for rationals $q \in L(\alpha-2^{-j}), q' \notin L(\alpha-2^{-j}) $ and for a $t$ such that,
\[\sum_{i=0}^{t-1} \tau (S^{i}) \leq q<q'\leq \sum_{i=0}^{t } \tau (S^{i}).\]
This $t$ is the needed $s_j$.
Now, let $\alpha$ be a dyadic rational. Then, $\alpha - 2^j$ has a finite binary representation and $s_j$ can be directly computed. So, in summary, the $(s_j)_j$ is a computable sequence, after (non-uniformly) knowing whether $\alpha$ is a dyadic rational or not. 
For all $r\geq 0$, define special projections
\[ G_{r}:= \sum_{i=s_{r}+1}^{s_{r+1}}S^{i}.\]
So,
\[ \tau(G_{r}) \leq \sum_{i=s_{r}+1}^{\infty}\tau(S^{i}) = \alpha -\sum_{i=0 }^{s_{r}}\tau(S^{i}) < 2^{-r} .\]

Notation: Let each $S^i$ be an operator on $ \mathbb{C}^{2^{n_{i}}}$. Let $n_{r} =$ max$\{n_{i}:s_{r}+1 \leq i \leq s_{r+1}\}$. By tensoring with the identity, we may assume that all
$S^i$, for $s_{r}+1 \leq i \leq s_{r+1}$, are operators on $\mathbb{C}^{2^{n_{r}}}$. Let $F_r \subseteq \mathbb{C}^{2^{n_{r}}}$
be an orthonormal set of complex algebraic vectors spanning the range of $G_r$.
Define a computable measure machine, $C$ as follows. 
On input $0^{r}10$, $C$ outputs $F_{2r}$ and on input $0^{r}11$, $C$ outputs $F_{2r+1}$. $C$ is clearly prefix-free and the measure of its domain is $\sum_{r}2^{-r+2}$, which is computable. Since each $G_r$ is a finite sum of the $S^i$s and as $\rho$ fails $(S^i)_i$ at $\epsilon$, there exist infinitely many $r$ such that Tr$(\rho_{n_{r}}G_{r})>\epsilon$. Since  we may let $n_r$ be strictly increasing in $r$, there are infinitely many such $n_r$. Fix such an $n_r$ and let $x= \lfloor r/2 \rfloor$ (I.e., $r=2x$ or $r=2x+1$). Then, QK$^{\epsilon}_{C}(\rho_{n_{r}}) \leq x+2+n_{r}+\text{log}\tau(G_{r}) \leq x+2+n_{r}- 2x.$ So, $\lfloor r/2 \rfloor-2 \leq n_{r}- $QK$^{\epsilon}_{C}(\rho_{n_{r}}) $. Letting $r$ go to infinity completes the proof.

\end{proof}

Theorem \ref{thm:Chaitin} is a Chaitin-type characterization of quantum-Schnorr randomness using QK$^{\epsilon}_{C}$. Together with Theorem \ref{thm:677} and lemma 3.9 in \cite{bhojraj2020quantum}, it implies that Schnorr randoms have a Chaitin type characterization in terms of $K_C$ (Theorem \ref{class}). To the best of our knowledge, this is the first, albeit simple, instance where results in quantum algorithmic randomness are used to prove a new result in the classical theory.
\begin{thm}
\label{thm:Chaitin}
$\rho$ is quantum Schnorr random if and only if for all computable measure machines $C$ and all $\epsilon$, $\forall d \forall^{\infty} n$ QK$^{\epsilon}_{C}(\rho_{n})>n+d$.
\end{thm}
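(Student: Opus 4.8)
The plan is to deduce both implications from the Levin--Schnorr characterization already established in Theorem~\ref{thm:schnor}, so that essentially no new machinery is required; the forward direction reuses the test construction from Theorem~\ref{thm:schnor} with a single shifted threshold.

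For ($\Leftarrow$), I would observe that the Chaitin condition is formally stronger than the Levin--Schnorr condition of Theorem~\ref{thm:schnor}. Fix a computable measure machine $C$ and an $\epsilon>0$, and apply the hypothesis with $d=0$: this gives $QK^{\epsilon}_{C}(\rho_{n})>n$ for all but finitely many $n$. Since $QK^{\epsilon}_{C}(\rho_{n})\geq 0$ for every $n$, the finitely many remaining values of $n$ are absorbed into a single additive constant, yielding a $d'$ with $QK^{\epsilon}_{C}(\rho_{n})>n-d'$ for all $n$. As $C$ and $\epsilon$ were arbitrary, Theorem~\ref{thm:schnor} gives that $\rho$ is quantum Schnorr random.

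For ($\Rightarrow$), I would argue by contraposition. Suppose the Chaitin condition fails: there are a computable measure machine $C$, a rational $\epsilon>0$, and a constant $d_{0}$ with $QK^{\epsilon}_{C}(\rho_{n})\leq n+d_{0}$ for infinitely many $n$. Let $T$ be the set of all $\sigma$ with $C(\sigma)\downarrow=F_{\sigma}$, an orthonormal subset of $\mathbb{C}^{2^{n_{\sigma}}}_{alg}$, satisfying $|\sigma|+\text{log}|F_{\sigma}|\leq n_{\sigma}+d_{0}$, and put $P_{\sigma}:=\sum_{v\in F_{\sigma}}|v\big>\big<v|$. The only difference from Theorem~\ref{thm:schnor} is the additive $d_{0}$ in the defining inequality, which scales every estimate by the fixed factor $2^{d_{0}}$. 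The steps are then: (i) for $\sigma\in T$ one has $\tau(P_{\sigma})=2^{-n_{\sigma}}|F_{\sigma}|\leq 2^{d_{0}-|\sigma|}$, so $\sum_{\sigma\in T}\tau(P_{\sigma})\leq 2^{d_{0}}\mu(\text{dom}(C))<\infty$, and $(P_{\sigma})_{\sigma\in T}$, enumerated by halting stage, is a strong Solovay test; (ii) this total measure is a computable real by the stage-$t$ truncation argument of Theorem~\ref{thm:schnor}, the tail now being bounded by $2^{d_{0}}(\mu(\text{dom}(C))-\mu(\text{dom}(C_{t})))$, which can still be forced below any $2^{-k}$ because $\mu(\text{dom}(C))$ is computable, so $(P_{\sigma})_{\sigma\in T}$ is a quantum Schnorr test in the sense of Definition~\ref{defn:Schnorr}; and (iii) since $QK^{\epsilon}_{C}(\rho_{n})$ is attained as a minimum whenever finite (only finitely many values $|\sigma|+\text{log}|F|$ lie below any bound, as $|F|\leq 2^{n}$), each of the infinitely many $n$ with $QK^{\epsilon}_{C}(\rho_{n})\leq n+d_{0}$ yields a witness $\sigma\in T$ with $n_{\sigma}=n$ and $\text{Tr}(\rho_{n}P_{\sigma})>\epsilon$. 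Distinct $n$ give distinct $\sigma$ because the dimension $2^{n_{\sigma}}$ is recoverable from $\sigma$, so $\rho$ fails this Schnorr test at $\epsilon$ and is therefore not quantum Schnorr random.

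The only delicate point is step (ii): one must check that replacing the threshold $0$ used in Theorem~\ref{thm:schnor} by the positive threshold $d_{0}$ does not spoil computability of the total measure of the test. This is precisely where the computable measure machine hypothesis does its work --- the extra factor $2^{d_{0}}$ is harmless exactly because $\mu(\text{dom}(C))$ is computable, so the tail of the series remains effectively controllable. This is also the conceptual reason that the Chaitin and Levin--Schnorr conditions turn out to be interchangeable for quantum Schnorr randomness.
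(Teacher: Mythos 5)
Your proof is correct and follows essentially the same route as the paper: the forward direction builds the same quantum Schnorr test from the strings $\sigma$ with $|\sigma|+\log|F_{\sigma}|\leq n_{\sigma}+d_{0}$, with the same $2^{d_0}$-scaled truncation argument for computability of the total measure, and the backward direction reduces to Theorem~\ref{thm:schnor} exactly as the paper does (you state it directly, the paper contrapositively, but the content is identical).
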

\begin{proof}
($\Rightarrow$)
Suppose toward a contradiction that there is a $C$, an $\epsilon>0$ and $c>0$ such that there are infinitely many $n$ with $ \text{QK}^{\epsilon}_{C}(\rho_{n})\leq n+c.$ Define a quantum Schnorr test $Q$ as follows. Let $T_{s}$ be the set of all $\sigma$ such that $C_{s}(\sigma)\downarrow = F_{\sigma}$, an orthonormal set such that $|\sigma|+$ log $|F_{\sigma}|< n_{\sigma} + c$ and $F_{\sigma} \subseteq  \mathbb{C}^{2^{n_{\sigma}}}$ for some $n_{\sigma}$. Let $T=\bigcup_{s}T_{s}.$ For all strings $\sigma$, let $P_{\sigma}:= \sum_{v\in F_{\sigma}} |v \big>\big<v |.$
Let $Q_s$ be the sequence of those $P_{\sigma}$ for $\sigma \in T_s$ and $Q$ the sequence of those $P_{\sigma}$ for $\sigma \in T$. That $Q$ is a quantum Schnorr test is shown by replacing $2^{-k}$ by $2^{-k-c}$ in the $\Longrightarrow$ direction of the proof of Theorem \ref{thm:schnor}. For any $n$ such that $ \text{QK}^{\epsilon}_{C}(\rho_{n})< n+c$, there is a $\sigma \in T$ such that Tr$(P_{\sigma} \rho_{n})>\epsilon$. So, $\rho$ fails $Q$ at $\epsilon$.

($\Leftarrow$) If $\rho$ is not quantum Schnorr random then by Theorem \ref{thm:schnor}, there is a $C$ and an $\epsilon$ such that $\forall d \exists n$ such that QK$^{\epsilon}_{C}(\rho_{n})\leq n-d$.
\end{proof}
We now show the classical version of Theorem \ref{thm:Chaitin}.
\begin{thm}
\label{class}
An infinite bitstring $X$ is quantum Schnorr random if and only if for all computable measure machines $C$, $\forall d \forall^{\infty} n$ K$_{C}(X\upharpoonright n)>n+d$.
\end{thm}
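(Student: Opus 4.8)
The plan is to transfer Theorem~\ref{thm:Chaitin} to the classical setting through the state induced by $X$, namely $\rho=(\rho_n)_n$ with $\rho_n=|X\upharpoonright n\big>\big<X\upharpoonright n|$. First I would invoke lemma 3.9 in \cite{bhojraj2020quantum} to identify the left-hand side: that lemma pins down when the induced state $\rho$ is quantum Schnorr random in terms of the classical Schnorr randomness of $X$, so the condition ``$X$ is quantum Schnorr random'' may be read as a statement about $\rho$. With this identification in hand, Theorem~\ref{thm:Chaitin} applied to $\rho$ gives that $X$ is quantum Schnorr random if and only if for all computable measure machines $C$ and all $\epsilon$, $\forall d\,\forall^{\infty} n\, QK^{\epsilon}_{C}(\rho_n)>n+d$, where crucially $QK^{\epsilon}_{C}(\rho_n)=QK^{\epsilon}_{C}(|X\upharpoonright n\big>\big<X\upharpoonright n|)$ is the complexity of a \emph{classical} qubitstring.

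It then remains to show that this $QK_C$-condition is equivalent to the desired $K_C$-condition, $\forall C\,\forall d\,\forall^{\infty} n\, K_C(X\upharpoonright n)>n+d$. For the direction from $K_C$ to $QK_C$, I would fix $C$ and $\epsilon$ and apply Theorem~\ref{thm:677} to obtain a computable measure machine $P$ with $K_P(\sigma)\leq QK^{\epsilon}_{C}(|\sigma\big>\big<\sigma|)+1$ for every classical $\sigma$; feeding the $K_C$-condition for the particular machine $P$ into this inequality yields $QK^{\epsilon}_{C}(\rho_n)\geq K_P(X\upharpoonright n)-1>n+d$ for almost all $n$, and since $d$ is arbitrary this is exactly the $QK_C$-condition for $C$ and $\epsilon$. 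For the reverse direction I would fix $C$ and use Remark~\ref{rem:qkc}, which gives $QK^{\epsilon}_{C}(|\sigma\big>\big<\sigma|)\leq K_C(\sigma)$ for every $\epsilon>0$; specializing the $QK_C$-condition to, say, $\epsilon=1/2$ then gives $K_C(X\upharpoonright n)\geq QK^{1/2}_{C}(\rho_n)>n+d$ eventually, as required.

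The one point that needs care, and which I expect to be the main obstacle, is the change of machine in Theorem~\ref{thm:677}: that theorem does not compare $QK^{\epsilon}_C$ with $K_C$ itself but with $K_P$ for an auxiliary machine $P$ built from $C$ and $\epsilon$. This is precisely why both characterizations quantify over \emph{all} computable measure machines; the universal quantifier on the $K_C$ side absorbs the passage from $C$ to $P$, so no single-machine comparison is needed and the two conditions line up. The remaining bookkeeping, namely that each $\rho_n$ is a classical pure state so that $K_C(X\upharpoonright n)$ and $QK^{\epsilon}_C(\rho_n)$ are the relevant quantities, and that fixing one value of $\epsilon$ suffices because the $QK_C$-condition is asserted for all $\epsilon$, is routine.
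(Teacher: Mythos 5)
Your proposal is correct and takes essentially the same route as the paper's own proof: both pass to the induced state via lemma 3.9 of the cited work, apply Theorem \ref{thm:Chaitin}, and then bridge $QK_C$ and $K_C$ using Remark \ref{rem:qkc} in one direction and Theorem \ref{thm:677} (with the change of machine from $C$ to $P$ absorbed by the universal quantifier over computable measure machines) in the other. The point you flag as needing care is exactly the point the paper's proof relies on, so there is nothing further to add.
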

\begin{proof}
$(\Longrightarrow):$ Suppose first that $X$ is Schnorr random. Then, $\rho:=\rho_X$, the state induced by $X$ is quantum Schnorr random by lemma 3.9 in \cite{bhojraj2020quantum}. Suppose for a contradiction that there is a $C$ and a $d$ such that $\exists^{\infty} n$ such that $K_C(X\upharpoonright n) \leq n+d$. By Remark \ref{rem:qkc}, $\exists^{\infty} n$ such that $QK^{0.5}_C(\rho_n) \leq n+d$, contradicting Theorem \ref{thm:Chaitin}.
($\Longleftarrow): $ Suppose that $X$ is not Schnorr random. Once again, by lemma 3.9 in \cite{bhojraj2020quantum}, we have that $\rho:=\rho_X$ is not quantum Schnorr random. By Theorem \ref{thm:Chaitin}, there is a $C$, an $\epsilon$ and a $d$ such that $\exists^{\infty} n$ such that  QK$^{\epsilon}_{C}(\rho_{n}) \leq n+d$. By Theorem \ref{thm:677}, there is a $P$ such that $\exists^{\infty} n$ such that  $K_{P}(\rho_{n}) \leq n+d+1$, a contradiction.
\end{proof}
\section{Conclusion}
With the intent of developing a quantum version of $K$, we introduced $QK$, a notion of descriptive complexity for density matrices using classical prefix-free Turing machines. Many connections between $K$ and Solovay and Schnorr randomness in the classical theory turned out to have analogous connections connections between $QK$ and weak Solovay and quantum Schnorr randomness. 

To the best of our knowledge, the current paper is the only one to study the incompressibility of initial segments (in the sense of prefix-free classical Turing machines) of weak Solovay and quantum Schnorr random states. Nies and Scholz have explored connections between quantum Martin-L{\"o}f randomness and a version of $QC$ using unitary (quantum) machines\cite{unpublished}. 

An important open question is whether weak Solovay random states have a Levin\textendash Schnorr characterization in terms of $QK$. Techniques similar to those used in subsection \ref{subsect:weak} may prove to be useful in answering this.

It still remains to find a complexity based characterization of q-MLR states\cite{unpublished}. An important question is whether weak Solovay randomness is equivalent to q-MLR, a positive answer to which will yield a $QK$ based characterizations (namely, those in Theorems \ref{thm:7} and \ref{thm:39}) of q-MLR.

Abbott, Calude and Svozil use value indefiniteness to show that infinite bitstrings resulting from measuring a finite dimensional quantum system satisfy some notions of randomness \cite{DBLP:conf/birthday/AbbottCS15,DBLP:journals/mscs/AbbottCS14,qrng2020 ,abbott2014value}.

Using states instead of a finite dimensional system, and techniques entirely different from theirs, we showed that it is possible to generate a strong form of classical randomness from a quantum source which is not quantum random. More precisely, we construct a computable, non q-MLR state which yields an arithmetically random bitstring with probability one when `measured'\cite{qpl}. Arithmetic randomness is a strong form of classical randomness, strictly stronger than 
Martin-L{\"o}f randomness (See 6.8.4 in \cite{misc1}). Roughly speaking, states which yield a Martin-L{\"o}f random  bitstring with probability one when `measured using a computable basis' are defined to be \emph{measurement random} (See \cite{qpl} for the precise definition). It would be interesting to explore the relationship between the initial segment $QK$ complexity and measurement randomness of states.

\section{Acknowledgements}
I am indebted to my PhD thesis advisor, Joseph S. Miller for his encouragement, advice and support. I thank Andr{\'e} Nies for encouraging my study of quantum algorithmic randomness.
\bibliographystyle{plain}
\bibliography{references.bib}

\end{document}